\newtheorem{thm}{Theorem}[section]
\newtheorem{cor}[thm]{Corollary}
\newtheorem{prop}[thm]{Proposition}
\newtheorem{prob}[thm]{Problem}
\theoremstyle{definition}
\newtheorem{defn}[thm]{Definition}
\newtheorem{remark}[thm]{Remark}
\def\ket#1{| #1 \rangle}
\def\bra#1{\langle #1 |}
\newcommand{\bb}[1]{\mathbb{#1}}
\newcommand{\cl}[1]{\mathcal{#1}}
\begin{document}

\title[]{Bisynchronous Games and Factorizable Maps}

\author[V.~I.~Paulsen]{Vern I.~Paulsen}
\address{Institute for Quantum Computing and Department of Pure Mathematics, University of Waterloo,
Waterloo, ON, Canada  N2L 3G1}
\email{vpaulsen@uwaterloo.ca}
\author[M.~Rahaman]{Mizanur Rahaman}
\address{Department of Mathematics, BITS Pilani K.K Birla Goa Campus, Goa, India 403726}
\email{mizanurr@goa.bits-pilani.ac.in}


\begin{abstract} We introduce a new class of non-local games, and corresponding densities, which we call {\it bisynchronous}.  Bisynchronous games are a subclass of synchronous games and exhibit many interesting symmetries when the algebra of the game is considered. We develop a close connection between these non-local games and the theory of quantum groups which recently surfaced in studies of graph isomorphism games.  When the number of inputs is equal to the number of outputs, we prove that a bisynchronous density arises from a trace on the quantum permutation group. 

Each bisynchronous density gives rise to a completely positive map and we prove that these maps are factorizable maps. 

\end{abstract}

\maketitle


\section{Introduction}

Recent results have shown that there is a close connection between densities that correspond to perfect strategies for the graph isomorphism game, introduced by \cite{mancinscaetal},  and traces on quantum permutation groups \cite{mancinscaetal2, lupini, brannanetal}.  In this paper, we identify a property of the graph isomorphism game, which we call {\it bisynchronicity} and prove that any game or conditional probability density $p(a,b|x,y)$ that has this property is connected with traces on the quantum permutation group. In addition, we prove that each such density $p(a,b|x,y)$ gives rise to a map that is {\it factorizable} in the sense of Anatharaman-Delaroche\cite{delaroche} as studied by Haagerup and Musat\cite{hm11}.

We begin by briefly recalling the theory and definitions of non-local games and introducing the games that we shall call {\it bisynchronous}.  In section 2, we prove some basic properties about bisynchronous games and bisynchronous conditional probability densities. In section 3, we look at properties of the completely positive map associated with a bisynchronous density. Then in section 4, we turn our attention to the relation between these densities and factorizable maps.

\subsection{Definitions of games and strategies}

By a {\it two-person  finite input-output game} we mean a tuple
 $\cl G=(I_A, I_B, O_A, O_B, \lambda)$ where $I_A, I_B, O_A, O_B$ are finite sets and
\[ \lambda: I_A \times I_B \times O_A \times O_B \to \{ 0,1 \} \]
is a function that represents the rules of the game, sometimes called the {\it predicate}.
The sets $I_A$ and $I_B$ represent the inputs that A and B can receive. These are often thought of as the questions that they can be asked.
The sets $O_A$ and $O_B$, represent the outputs that Alice and Bob can produce,
respectively, often thought of as the possible answers that they can give. When $\lambda(x,y,a,b)=1$, then A and B are considered to have given a satisfactory pair of answers, $a,b$ to the questions $x,y$, and when $\lambda(x,y,a,b)=0$ then the answer is considered to be "wrong" or unsatisfactory.

For each round of the game, the Referee(sometimes also called the {\it verifier}) selects a pair $(x,y) \in I_A \times I_B$, gives Alice $x$ and Bob $y$, and they then produce outputs (answers), $a \in O_A$ and $b \in O_B$, respectively.
They win the round if $\lambda(x,y,a,b) =1$ and lose otherwise.
A and B know the sets and the function $\lambda$
and cooperate before the game to produce a strategy for providing outputs,
but while producing outputs, A and B only know their own inputs
and are not allowed to know the other person's input. Each time that they are given an input and produce an output is referred to as a {\it round} of the game.

These games are memoryless, in the sense that if they receive the same input pair $(x,y)$ on two different rounds, then there is no requirement that they produce the same output pair for both rounds.

Such a game is called {\it synchronous} provided that: (i) A and B have the same input sets and the same output sets, which we denote by $I$ and $O$, respectively, and (ii) $\lambda$ satisfies:
\[ \forall v \in I, \,\, \lambda(v,v,a,b) = \begin{cases} 0 & a \ne b\\ 1 & a=b \end{cases}.\]
That is, whenever Alice and Bob receive the same inputs then they must produce the same outputs. To simplify notation we write a synchronous game as $\cl G= (I,O, \lambda)$.

The concept of a synchronous game was introduced in \cite{PSSTW}, where it was recognized that many games arising from graph theory share this property.

A {\it graph} $G$ is specified by a vertex set $V(G)$ and an edge set $E(G) \subseteq V(G) \times V(G)$, satisfying $(v,v) \notin E(G)$ and $(v,w) \in E(G) \implies (w,v) \in E(G)$. 

Given two graphs $G$ and $H$, a {\it graph homomorphism from G to H} is a function $f:V(G) \to V(H)$ with the property that $(v,w) \in E(G) \implies (f(v), f(w)) \in E(H)$.
The {\it graph homomorphism game} from $G$ to $H$, denoted $Hom(G,H)$ has inputs $I_A=I_B = V(G)$ and outputs $O_A=O_B = V(H).$  They win provided that whenever Alice and Bob receive inputs that are an edge in $G$, then their outputs are an edge in $H$ and that whenever Alice and Bob receive the same vertex in $G$ they produce the same vertex in $H$. For any pair of graphs, this is a synchronous game.

Finally, it is not difficult to see that if $K_c$ denotes the complete graph on $c$ vertices then a graph homomorphism exists from $G$ to $K_c$ if and only if $G$ has a c-coloring. This is because any time $(v,w) \in E(G)$ then a graph homomorphism must send them to distinct vertices in $K_c$. Similarly, a graph homomorphism exists from $K_c$ to $G$ if and only if $G$ contains a $c$-clique, i.e., a subset of $c$ vertices that are all mutually connected. A set of vertices in a graph $G$ is {\it independent} if there are no edges connecting the vertices.  Finding an independent set of size $c$ is the same as finding a graph homomorphism from $K_c$ to the {\it graph complement} $\overline{G}$, i.e., the graph with the same vertices but the complementary edge relation.



\begin{defn} We call a synchronous game $\cl G=(I,O, \lambda)$ {\bf bisynchronous} provided that, whenever $x \ne y$, then $\lambda(x,y,a,a) =0$.
\end{defn}

Thus, a game is bisynchronous provided that A and B have the same question and answer sets and whenever A and B receive the same question they must give the same answer and whenever they receive different questions, then they must give different answers.

It is not hard to see that $Hom(G, K_c)$ need not be bisynchronous, since two distinct vertices could be given the same colour.  The game $Hom(K_c,G)$ is bisynchronous, since whenever a pair of vertices in $K_c$ are not equal they are connected and so the output pair must be connected in $G$, and hence not equal, since we do not allow loops. Thus, the game for determining colouring number is not bisynchronous, while the games for determining clique and independence numbers are bisynchronous.

 The {\it graph isomorphism game} from $G$ to $H$, denoted $Iso(G,H)$ was introduced in \cite{mancinscaetal}. It is a synchronous game whose input and output sets are the same and are both the disjoint union of $V(G)$ and $V(H)$.  We refer to \cite{mancinscaetal} for the list of rules. But one rule, in particular, says that the {\it relation} between any pair of inputs $(v,w)$ and outputs $(a,b)$ must be the same. Here, relation refers to whether the pair belongs to the same graph or not and if they belong to the same graph, then they are either equal or form an edge. Thus, the graph isomorphism game is bisynchronous, because if two input vertices are not equal, then the output vertices must not be equal either.

A {\it random strategy} for a two person finite input-output game can be identified with a conditional probability density $p(a,b|x,y)$, which represents the probability that, given inputs $(x,y) \in I_A \times I_B$, A and B produce outputs $(a,b) \in O_A \times O_B$.  Thus, $p(a,b|v,w) \ge 0$ and for each $(v,w),$
\[ \sum_{a \in O_A, b \in O_B} p(a,b|v,w) =1.\]


Given a game $\cl G$, a random strategy is called {\it perfect} if
\[ \lambda(x,y,a,b)=0 \implies p(a,b|x,y) =0, \, \forall (x,y,a,b) \in I_A \times I_B \times O_A \times O_B.\]
Thus, a perfect strategy has probability 0 of producing a "wrong" answer.

In order for $p(a,b|x,y)$ to be a perfect strategy for a synchronous game it must satisfy $p(a,b|x,x) =0$ whenever $a \ne b$. Such densities were called {\it synchronous} in \cite{PSSTW} where they were proven to arise as traces on certain C*-algebras.  

\begin{defn} We shall call a conditional probability density $p(a,b|x,y), \, x,y \in I, a,b \in O$ {\bf bisynchronous}, provided that
$p(a,b| x,x) =0, \, \forall a \ne b$ and $p(a,a| x,y) = 0, \, \forall x \ne y$.
\end{defn}

Thus, the bisynchronous densities are a subset of the synchronous densities.

These conditional probability densities can belong to several different sets of densities.

Definitions of these various sets of probability densities, including {\it loc, q, qs, qa, qc, vect, nsb} can be found in \cite[Section~6]{OP}, \cite{KPS} or in  \cite{PSSTW}, so we will avoid repeating them here.
We only remark that for $t\in \{ loc, q, qa, qc, vect, nsb \}$ we use $C_t$ to denote the corresponding set of conditional probabilities. It is known that
\begin{equation}\label{Eq-heirarchy}
 C_{loc} \subsetneq C_q \subseteq C_{qs}\subseteq C_{qa} \subseteq C_{qc} \subsetneq C_{vect} \subsetneq C_{nsb}.
\end{equation}

The sets $C_q$, $C_{qs}$, $C_{qa}$ and $C_{qc}$ represent four potentially different mathematical models for the set of all probabilities that can arise as outcomes from entangled quantum experiments. The question of whether or not $C_{qa}=C_{qc}$ for any number of experiments and any number of outputs is known to be equivalent to Connes' embedding conjecture due to results of \cite{Oz13}.

When we want to emphasize that the densities have $n$ inputs and $k$ outputs, we will write $C_t(n,k)$.  We will let $C_t^{bs}(n,k)$ and $C_t^s(n,k)$ denote the subsets of $C_t(n,k)$ consisting of bisynchronous and synchronous densities, respectively.  Thus, we have that
\[ C_t^{bs}(n,k) \subseteq C_t^s(n,k) \subseteq C_t(n,k).\]

We say that $p(a,b|x,y)$ is a {\it perfect t-strategy} for a game provided that it is a perfect strategy that belongs to the corresponding set $C_t$ of probability densities.

\begin{remark}\label{synctobisync} There is a way that every synchronous game can be made into a bisynchronous game.  Given a synchronous game $\cl G= (I, O, \lambda)$ we obtain a bisynchronous game simply by demanding that the players also return the question. Formally, let $\overline{\cl G}= ( I, \overline{O}, \overline{\lambda})$ where
$\overline{O} = I \times O$ and
\[ \overline{\lambda}(x,y, (x',a), (y',b)) = \lambda(x,y,a,b) \delta_{x,x'} \delta_{y,y'},\]
where $\delta_{z,w}$ is the Dirac delta function. Now it is easily seen that this game is bisynchronous. 
\end{remark}

\subsection{The Algebra of a synchronous game}
In \cite{OP, vp, HMPS} the concept of the {\it *-algebra of a synchronous game} $\cl A(\cl G)$ was introduced and studied. Given a game $\cl G= (I, O, \lambda )$, its algebra $\cl A(\cl G)$ is the unital  *-algebra with generators $\{ e_{x,a}: x \in I, a \in O \}$ and relations
\begin{itemize}
\item $e_{x,a} = e_{x,a}^2 = e_{x, a}^*, \, \forall x \in I, a \in O$,
\item $\sum_a e_{x,a} =1, \, \forall x \in I$,
\item $\lambda(x,y,a,b)=0 \implies e_{x,a}e_{y,b}=0.$
\end{itemize}

It was proven that the existence of various types of perfect strategies for the game were equivalent to the existence of various types of representations of this algebra. In addition, we say that $\cl G$ has a {\it perfect C*-strategy} provided that there exists a unital *-homomorphism $\pi: \cl A(\cl G) \to B(\cl H)$ for some Hilbert space $\cl H$ ($\cl H\neq 0$), we call such a map a {\it representation of $\cl A(\cl G)$ on $\cl H$}. We say that $\cl G$ has a {\it perfect $A^*$-strategy} provided that $\cl A(\cl G) \ne (0)$.

In \cite{brannanetal} two synchronous games $\cl G_1, \cl G_2$
were called {\it *-equivalent} if there existed unital *-homomorphisms $\pi: \cl A(\cl G_1) \to \cl A(\cl G_2)$ and $\rho: \cl A(\cl G_2) \to \cl A(\cl G_1)$. It follows from \cite[Proposition~5.3]{brannanetal} that if $\cl G_1$ and $\cl G_2$ are *-equivalent, then $\cl G_1$ has a perfect t-strategy if and only if $\cl G_2$ has a perfect t-strategy, for $t \in \{ loc, q, qa, qc, C^*, A^* \}$.

We now show that  the bisynchronous game $\overline{\cl G}$  obtained from a synchronous game $\cl G$ as in Remark~\ref{synctobisync}  is *-equivalent to $\cl G$. Hence, $\overline{\cl G}$ will have a perfect t-strategy for $t \in \{ loc, q, qa, qc, C^*, A^*  \}$ if and only if $\cl G$ has a perfect t-strategy.

To see this claim note that the algebra of $\overline{\cl G}$ will have generators $e_{x, (x',a)}, \, x,x' \in I, a \in O$ satisfying
\begin{itemize}
\item $e_{x,(x',a)} = e_{x,(x',a)}^* = e_{x,(x',a)}^2, \, \forall x,x' \in I, a \in O$
\item $\sum_{(x',a)} e_{x, (x',a)} = 1, \, \forall x$
\item $\lambda(x,y, (x',a), (y',b)) =0 \implies e_{x, (x',a)}e_{y, (y',b)} =0.$
\end{itemize}
From this it follows that if $x \ne x'$ then for any $y$,
\[ e_{x,(x',a)} = \sum_{(y',b)} e_{x,(x',a)}e_{y, (y',b)} =0.\]
Setting $f_{x,a} = e_{x,(x,a)}$, we see that these elements generate $\cl A(\overline{\cl G})$ and satisfy the same relations as the canonical generators $e_{x,a}$ of $\cl A(\cl G)$.
Thus, the map $f_{x,a} \to e_{x,a}$ defines a *-algebra isomorphism of $\cl A(\overline{\cl G})$ onto $\cl A(\cl G)$.

A similar proof, working with densities, shows that $\cl G$ has a perfect non-signalling strategy if and only if $\overline{\cl G}$ has a perfect non-signalling strategy.

Thus, in a certain sense, we can always reduce the study of synchronous games to bisynchronous games.  However, our strongest results about bisynchronous games requires that the number of inputs be equal to the number of outputs and this property will never be satisfied by the bisynchronous game $\overline{\cl G}$ obtained in this fashion, except in the trivial case of one output games.


\section{Properties of Bisynchronous games and bisynchronous correlations}
Let $\cl G=(I,O, \lambda)$ be bisynchronous, set $|I|=n, |O|=k$ and consider the game algebra $\cl A(\cl G)$ \cite{HMPS}.
  Note that the synchronous condition implies that $e_{x,a}e_{x,b} =0, \, \forall x \in I, \, \forall a \ne b$.
 The bisynchronous condition implies that in addition,
\[ e_{x,a} e_{y,a} =0, \, \forall a \in O, \, \forall x \ne y.\]
Let
\[ p_a = \sum_x e_{x,a},\]
we have that $p_a=p_a^*$ and
\[ p_a^2 = \sum_{x,y} e_{x,a}e_{y,a} = \sum_x e_{x,a} = p_a.\]
Also, 
\[ \sum_a p_a = \sum_x \sum_a e_{x,a} = n I.\]
If we let $q_a = 1 - p_a$ then $q_a = q_a^* = q_a^2$ and
\[ \sum_a q_a = \sum_a ( 1 - p_a) = kI - nI.\]
Thus, if $\cl G$ has a perfect C*-strategy, i.e., the game algebra has a unital *-representation as operators on a Hilbert space, or, equivalently, there exist projections $\{E_{x,a} \}$ on some Hilbert space satisfying these relations, then
\[ n \le k.\]
We actually need to assume that $\cl A(\cl G)$ has a representation on a Hilbert space to make this last conclusion since it was shown in \cite{HMPS} that *-algebras exist with self-adjoint idempotents that sum to negative multiples of the identity.

\begin{remark}
Let $n=k$ and assume that the bisynchronous game has a perfect C*-strategy, i.e., that there is a unital *-homomorphism, $\pi: \cl A(\cl G) \to \cl B$ for some unital C*-algebra $\cl B$.  Then we have $\sum_a \pi(q_a)=0$ and hence $\pi(q_a)=0$ for all $a$. So $\pi(p_a)=1$ for all $a$. Which means $\sum_x \pi(e_{x,a})=1$. So the projections $\{\pi(e_{x,a})\}$ satisfy:
\begin{itemize}
\item $\sum_a \pi(e_{x,a})=1$,
\item $\sum_x \pi(e_{x,a})=1$,
\item  $\pi(e_{x,a}) \pi(e_{y,a})=0$ for $x\neq y$ and also $\pi(e_{x,a}) \pi(e_{x,b})=0$ for $a\neq b$. 
\end{itemize}
Note that the above relations imply that if we set, $P=(\pi(e_{x,a}))_{x,a} \in M_n(\cl B)$, then $P$ is what is often called a {\it magic permutation}, {\it magic unitary}, or {\it quantum permutation}. We prefer this latter terminology. Thus, a quantum permutation $P$ is a matrix of projections with $P^*P= PP^*=I_n$, the identity of $M_n(\cl B)$.

The {\it quantum permutation group} $\cl O(S_n^+)$ is the universal C*-algebra generated by elements $\{ u_{i,j} : 1 \le i,j \le n \}$ with relations $u_{i,j}^2= u_{i,j}^* = u_{i,j}, \, \forall i,j$ and
$\sum_j u_{i,j} =1, \forall i$, $\sum_i u_{i,j} =1, \forall j$. For more details on $\cl O(S_n^+)$, see \cite{brannanetal}.

Thus, when $n=k$ and we identify $I=O = \{ 1,..., n \}$, then the image of the algebra of the game inside a C*-algebra is a quotient of the quantum permutation group. In order to avoid this latter identification, it will often be useful for us to consider $\cl O(S_n^+)$ as having generators $\{ u_{x,a}: x \in I, a \in O \}$ where $I$ and $O$ both have cardinality $n$. 
\end{remark}

Recall that if a conditional probability density $p(a,b|x,y)$  is a perfect density for a bisynchronous game, then  $p(a,b|x,y)$ is a bisynchronous density.

We now characterize the bisynchronous densities in the case $n=k$.

\begin{thm}\label{thm-bs-charact.}
 We have the following.
 \begin{enumerate}
\item  $p \in C_{qc}^{bs}(n,n)$ if and only if there is a tracial state $\tau$ on $\cl O(S_n^+)$ such that \[p(a,b|x,y) = \tau (u_{x,a}u_{y,b}).\]
\item $p \in C_{qa}^{bs}(n,n)$ if and only if there is a unital *-homomorphism $\pi$ of $\cl O(S_n^+)$ into an ultrapower of the hyperfinite $II_1$-factor  $\cl R^{\omega}$ such that
\[ p(a,b|x,y) = \tau_{\omega}( \pi(u_{x,a}u_{y,b})),\]
where $\tau_{\omega}$ is the canonical trace on $\cl R^{\omega}.$
\item $p \in C_q^{bs}(n,n)$ if and only if there is a unital *-homomorphism $\pi$ of $\cl O(S_n^+)$ into a finite dimensional C*-algebra with a trace $\tau$ such that
\[ p(a,b|x,y) = \tau( \pi(u_{x,a}u_{y,b})).\]
\item $p \in C_{loc}^{bs}(n,n)$ if and only if there is a unital *-homomorphism of $\cl O(S_n^+)$ into an abelian C*-algebra and a state $\tau$ on the algebra such that
\[ p(a,b|x,y) = \tau(\pi(u_{x,a}u_{y,b})).\]
\item  $p \in C_{vect}^{bs}(n,n)$ if and only if there is a Hilbert space $\cl H$ and vectors $\{ h_{x,a} \}$ such that $h_{x,a} \perp h_{x,b}, \, \forall x, \, \forall a \ne b$, $h_{x,a} \perp h_{y,a}, \forall a, \forall x \ne y$ such that $\forall x,a, \,\, h:=\sum_b h_{x,b} = \sum_y h_{y,a}$ and $h$ is a unit vector, such that
\[ p(a,b|x,y) = \langle h_{x,a}, h_{y,b} \rangle.\]
\end{enumerate}
\end{thm}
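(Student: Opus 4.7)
The plan is to bootstrap from the analogous characterization of synchronous densities from \cite{PSSTW,KPS,HMPS}, which represents $p \in C_t^s(n,n)$ as $p(a,b|x,y) = \tau(e_{x,a}e_{y,b})$, where $\{e_{x,a}\}_{a\in O}$ is a PVM for each $x$ inside a unital C*-algebra $\cl B$ of the appropriate type (finite-dimensional for $t=q$, abelian for $t=loc$, of the form $\cl R^\omega$ for $t=qa$, arbitrary for $t=qc$) and $\tau$ is a (tracial) state. A bisynchronous density is in particular synchronous, so one such representation exists. The goal is to show that the additional bisynchronous constraint $p(a,a|x,y)=0$ for $x\ne y$ forces the matrix $(e_{x,a})_{x,a}$ to be a quantum permutation, and hence, by universality of $\cl O(S_n^+)$, factors the representation through it.

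The key step is the following trace orthogonality principle: if $\tau$ is a tracial state on $\cl B$ and $p,q\in\cl B$ are projections with $\tau(pq)=0$, then after passing to the GNS representation $\pi_\tau$, one has $\pi_\tau(q)\pi_\tau(p)=0$. Indeed, the trace identity gives $\tau(pq)=\tau(pqp)$, and for projections $pqp=(qp)^*(qp)$, so $\tau((qp)^*(qp))=0$, forcing $qp$ to act as zero on the GNS Hilbert space by faithfulness of the induced trace on $\pi_\tau(\cl B)''$. Applying this with $p=e_{x,a}$ and $q=e_{y,a}$ gives $\pi_\tau(e_{x,a})\pi_\tau(e_{y,a})=0$ for all $x\ne y$. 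Combined with the PVM relation $\sum_a\pi_\tau(e_{x,a})=1$, this also yields $\sum_x\pi_\tau(e_{x,a})=1$ (multiply the PVM identity by $\pi_\tau(e_{y,a})$ on the right and sum over $y$, then use the orthogonality to collapse the sum). Hence $U:=(\pi_\tau(e_{x,a}))_{x,a}$ is a magic unitary in $M_n(\pi_\tau(\cl B)'')$, and universality of $\cl O(S_n^+)$ provides a unital *-homomorphism $\pi:\cl O(S_n^+)\to\pi_\tau(\cl B)''$ with $\pi(u_{x,a})=\pi_\tau(e_{x,a})$. This gives the desired formula $p(a,b|x,y)=\bar\tau(\pi(u_{x,a}u_{y,b}))$, where $\bar\tau$ is the canonical extension of $\tau$ to the von Neumann closure. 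The class of the target algebra ($\cl R^\omega$ for $qa$, finite-dimensional for $q$, abelian for $loc$, arbitrary with trace for $qc$) is inherited from the starting synchronous representation.

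The converse direction is straightforward: given $\pi:\cl O(S_n^+)\to\cl B$ and a (tracial) state $\tau$ on $\cl B$ of the appropriate kind, the formula $p(a,b|x,y)=\tau(\pi(u_{x,a}u_{y,b}))$ is automatically a bisynchronous density in $C_t^{bs}(n,n)$: normalization follows from the two defining relations $\sum_a u_{x,a}=\sum_x u_{x,a}=1$, synchronicity from $u_{x,a}u_{x,b}=0$ for $a\ne b$, and the extra bisynchronous relation from $u_{x,a}u_{y,a}=0$ for $x\ne y$ (which is built into the definition of $\cl O(S_n^+)$). The main technical delicacy I anticipate is in part (2): one must arrange the factorization through $\cl R^\omega$ itself rather than merely through a tracial von Neumann algebra, which requires re-assembling the projections $\pi_\tau(e_{x,a})$ inside the GNS image of the original $\cl R^\omega$-representation and invoking the closedness of $\cl R^\omega$ under the operations used (sums, products, and von Neumann closure of a finitely generated subalgebra sitting inside $\cl R^\omega$).

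For part (5), I would argue directly rather than via $\cl O(S_n^+)$. By definition of $C_{vect}$, there exist vectors $\{h_{x,a}\}$ with $p(a,b|x,y)=\langle h_{x,a},h_{y,b}\rangle$ and a unit vector $h$ such that $\sum_a h_{x,a}=h$ for all $x$; the no-signalling identity $\sum_b p(a,b|x,y)=\sum_b p(a,b|x,y')$ together with this sum-to-$h$ condition forces also $\sum_y h_{y,a}=h$ for all $a$ (after a standard decomposition argument using $\langle h_{x,a},h\rangle=\|h_{x,a}\|^2$ from synchronicity). The synchronous inner-product conditions yield $h_{x,a}\perp h_{x,b}$ for $a\ne b$, and the bisynchronous condition $\langle h_{x,a},h_{y,a}\rangle=0$ yields $h_{x,a}\perp h_{y,a}$ for $x\ne y$. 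The converse is immediate by setting $p(a,b|x,y)=\langle h_{x,a},h_{y,b}\rangle$ and verifying the axioms directly from the orthogonality and sum conditions.
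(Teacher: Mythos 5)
Your overall plan is the same as the paper's: start from the synchronous characterization via a trace on $C^*(\mathbb F(n,n))$ (or the corresponding $t$-level object), pass to the GNS representation where the trace becomes faithful, use $\tau\bigl((u_{y,a}u_{x,a})^*(u_{y,a}u_{x,a})\bigr)=\tau(e_{x,a}e_{y,a})=0$ to kill the products $u_{x,a}u_{y,a}$ for $x\neq y$, conclude that $(u_{x,a})$ is a quantum permutation, and then invoke the universal property of $\cl O(S_n^+)$. The converse and the vectorial case are likewise handled in the same spirit. So the strategy is sound.

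There is, however, a real gap in the step that gets you from ``rows sum to $1$'' to ``columns sum to $1$.'' You propose to obtain $\sum_x \pi_\tau(e_{x,a})=1$ by ``multiplying the PVM identity by $\pi_\tau(e_{y,a})$ on the right and summing over $y$, then using orthogonality to collapse the sum.'' That manipulation does not close: from $\sum_b e_{x,b}=1$, right-multiplication by $e_{y,a'}$ and summation over $y$ gives $\sum_{b,y} e_{x,b}e_{y,a'}=\sum_y e_{y,a'}$, and the cross terms with $b\neq a'$, $y\neq x$ are not killed by either the synchronous or the bisynchronous relation, so the left side does not collapse to $1$. Notice also that any purely algebraic manipulation of this type cannot suffice, because it never uses the hypothesis $n=k$, which is essential (for $n<k$ the column sums are strictly less than $1$). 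The argument the paper uses (its Remark 2.1) is a counting-and-positivity argument: the bisynchronous orthogonality makes $p_a:=\sum_x e_{x,a}$ a projection; summing over $a$ gives $\sum_a p_a=n\cdot 1$ while there are exactly $k=n$ of them, so $\sum_a(1-p_a)=0$; since each $1-p_a$ is a positive operator, each must vanish, hence $p_a=1$. You should replace your parenthetical with this argument (or the equivalent trace version: $\tau(p_a)\le 1$, $\sum_a\tau(p_a)=n$, $n$ terms, hence $\tau(p_a)=1$ and faithfulness gives $p_a=1$). With that fix, parts (1)--(4) go through, and your sketch of part (5) matches the paper's direct Cauchy--Schwarz treatment once ``standard decomposition argument'' is replaced by the actual chain of inequalities establishing $\sum_a\|k_a\|^2=n$ and $\|k_a\|\le 1$ with $k_a:=\sum_y h_{y,a}$.
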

\begin{proof}  We first prove the qc case.   Let $\bb F(n,n)$ denote the free product of $n$ copies of the cyclic group of order $n$. This is generated by $n$ unitaries $u_x$ with $u_x^n =1$. Each unitary is of the form $u_x = \sum_a  \omega^a e_{x,a}$ where $\omega = e^{2 \pi i/n}$ and the $e_{x,a}$'s are the spectral projections. Since $p$ is synchronous we know by \cite{PSSTW} that there is a trace $\tau$ on  $C^*(\bb F(n,n))$ such that  $p(a,b|x,y) = \tau(e_{x,a}e_{y,b})$. In particular, $\sum_a e_{x,a} = 1, \, \forall x$. The bisynchronous condition implies that $0 = p(a,a|x,y) = \tau(e_{x,a}e_{y,a})= \tau((e_{y,a}e_{x,a})(e_{x,a} e_{y,a}))$.  Let $\pi$ be the GNS representation of $\tau$ and set $u_{x,a} = \pi(e_{x,a})$. Then each $u_{x,a}$ is a projection and $\sum_a u_{x,a} =1$. On the image of the GNS representation $\pi_{\tau}$ of $\tau$ we will have that $\tau$ is a faithful trace. Thus, from the last equation we see that $\tau( (u_{x,a}u_{y,a})^*(u_{x,a} u_{y,a})) =0$ and it follows that $u_{x,a} u_{y,a} =0$.  By the same argument as in Remark~2.1, we have that $\sum_x u_{x,a} =1$, so that  $P=( u_{x,a})$ is a magic permutation and so the elements $u_{x,a}$ are the image of a representation of $\cl O(S_n^+)$ and $\tau$ induces a trace on this C*-algebra. Thus, every $p(a,b|x,y) \in C_{qc}^{bs}(n,n)$ has the desired form.   

Conversely, it is easy to see that any density of the form $p(a,b|x,y) = \tau( u_{x,a}u_{y,b})$ for some trace on $\cl O(S_n^+)$ is bisynchronous. Thus, (1) follows.

The proofs of (2), (3), and (4) are similar using the characterizations of synchronous correlations in \cite{KPS}.  For example, \cite{KPS} shows that every correlation in $C_{qa}^s(n,n)$ comes from an embedding of $C^*(\bb F(n,n))$ into $\cl R^{\omega}$ composed with the canonical faithful trace $\tau_{\omega}$ on $\cl R^{\omega}$.  If one lets $u_{x,a} \in \cl R^{\omega}$ denote the image of $e_{x,a}$, then the same calculation shows that $P=(u_{x,a})$ is a magic permutation in $M_n(\cl R^{\omega})$. 

We now argue the vectorial case.    By the characterisation of $C^s_{vect}(n,n)$ given in \cite{NGHA} and \cite{NPA} (see also \cite{OP}), we have vectors,  $\{ h_{x,a} \}$ such that $\sum_a h_{x,a} = h$ where $h$ is a unit vector and $h_{x,a} \perp h_{x,b}, \, \forall x, a \ne b$ with $p(a,b| x,y) = \langle h_{x,a} , h_{y,b} \rangle$ where $p$ is our given bisynchronous density.   Set $k_a = \sum_x h_{x,a}$. Since the density is bisynchronous, $h_{x,a} \perp h_{y,a}$ so that 
\[ \|k_a \|^2 = \sum_x \| h_{x,a} \|^2,\]
and
\[ \sum_a \|k_a\|^2  = \sum_{x,a} \| h_{x,a}\|^2 = \sum_x (\sum_a \| h_{x,a}\|^2 ) = \sum_x \| h \|^2 = n.\]
But we also have that
\[ n h = \sum_a k_a \implies n = \sum_a \langle h | k_a \rangle \le \sum_a  \| k_a \| \le (\sum_a \| k_a \|^2 )^{1/2} ( \sum_a 1 )^{1/2} = n^{1/2} n^{1/2}.\]
Thus, we must have equality throughout and this implies that
\[ n= \sum_a \|k_a\| \le \big( \|k_1\|, ..., \|k_n \|\big) \cdot \big( 1,..., 1 \big) \le n.\]
By Cauchy-Schwarz these last two vectors must be parallel which implies that $\|k_a\|= \|k_b \|= 1, \forall a,b$.
Also, since
\[ n = \| \sum_a k_a \| \le \sum_a \|k_a \| = n,\]
it must be the case that $k_a = k_b, \forall a,b$. Finally, since $n h = \sum_a k_a$ we have that $k_a = h, \forall a$.
Thus,  $\sum_y h_{y,a} = h$ and we see that this set of vectors satisfies the conditions of (5).
\end{proof}

\begin{remark} Note that the above proof shows that if we have a set of vectors $\{ h_{x,a} \}$ defining a vectorial bisynchronous density as above then the matrix of vectors $(h_{x,a})_{x \in I, a \in O}$ is a "vector permutation" matrix, in the sense that every row consists of orthogonal vectors, every column consists of orthogonal vectors, and every row and column sums to a fixed unit vector.
\end{remark}



\begin{remark} Recall that the set $C_{qs}(n,k)$ is defined to be the set of correlations arising from tensor products of possibly infinite dimensional Hilbert spaces. In \cite{KPS} it was shown that $C^s_{qs}(n,k) = C^s_q(n,k)$.  Hence, it follows that $C^{bs}_{qs}(n,k) = C^{bs}_q(n,k)$ also.
\end{remark}
 
One can understand bisynchronous games and densities in terms of the {\it flip} operation.

\subsection{The Flip of a Game}\label{subsection-flip}
Given a game $\cl G=(I, O, \lambda)$, we define a new game $ \tilde{\mathcal{G}}=(\tilde{I},\tilde{O}, 
\tilde{\lambda})$, 
where $\tilde{I}=O, \tilde{O}=I$ and \[\tilde{\lambda}(a,b,x,y)=\lambda(x,y,a,b).\] 
It is easy to see that $\tilde{\tilde{\mathcal{G}}}=\cl G$.
Moreover we have the following fact.

\begin{prop} Let $\cl G=( I, O, \lambda)$ be a game.  Then $\cl G$ is bisynchronous if and only if $\cl G$ and its flip $\tilde{\cl G}$ are both synchronous.
Moreover, $\cl G$ is bisynchronous if and only if $\tilde{\cl G}$ is bisynchronous.
\end{prop}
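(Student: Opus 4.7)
The plan is to unwind the definitions directly. Recall that $\cl G$ synchronous means $\lambda(v,v,a,b) = \delta_{a,b}$ for all $v \in I$, $a,b \in O$, and that bisynchronous adds the condition $\lambda(x,y,a,a) = 0$ whenever $x \neq y$. Since $\tilde\lambda(a,b,x,y) = \lambda(x,y,a,b)$ and the flip interchanges the roles of inputs and outputs, the synchronous condition on $\tilde{\cl G}$ translates into a condition on $\lambda$ that is precisely what distinguishes synchronous from bisynchronous.

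For the first biconditional, I would argue as follows. Assume $\cl G$ is bisynchronous. Then $\cl G$ is synchronous by definition. To check $\tilde{\cl G}$ is synchronous, observe that for any $a \in \tilde I = O$ and any $x, y \in \tilde O = I$, we have $\tilde\lambda(a,a,x,y) = \lambda(x,y,a,a)$. When $x = y$, the synchronous condition on $\cl G$ gives $\lambda(x,x,a,a) = 1$; when $x \neq y$, the bisynchronous condition on $\cl G$ gives $\lambda(x,y,a,a) = 0$. Hence $\tilde\lambda(a,a,x,y) = \delta_{x,y}$, so $\tilde{\cl G}$ is synchronous. Conversely, if $\cl G$ and $\tilde{\cl G}$ are both synchronous, the condition $\tilde\lambda(a,a,x,y) = 0$ for $x \neq y$ reads $\lambda(x,y,a,a) = 0$ for $x \neq y$, which together with $\cl G$ being synchronous is exactly the bisynchronous condition.

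For the second biconditional, the cleanest route is to invoke the identity $\tilde{\tilde{\cl G}} = \cl G$ together with the first biconditional. Applying the first biconditional to $\tilde{\cl G}$ gives: $\tilde{\cl G}$ is bisynchronous iff $\tilde{\cl G}$ and $\cl G$ are both synchronous, which is the same right-hand side as in the equivalence for $\cl G$. Hence $\cl G$ is bisynchronous iff $\tilde{\cl G}$ is bisynchronous.

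There is no real obstacle here; the whole proposition is essentially a bookkeeping exercise in the definitions, and the only thing to watch is not to forget the $\lambda(v,v,a,a) = 1$ half of the synchronous condition (as opposed to only the $\lambda(v,v,a,b) = 0$ for $a \neq b$ half), since both halves are needed when translating synchrony of $\tilde{\cl G}$ back into conditions on $\lambda$.
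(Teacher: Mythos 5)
Your proof is correct. The paper states this proposition without proof (treating it as routine bookkeeping from the definitions), and your direct unwinding of $\tilde\lambda(a,a,x,y) = \lambda(x,y,a,a)$ together with the involution $\tilde{\tilde{\cl G}} = \cl G$ is exactly the argument the authors leave implicit; your cautionary note about needing the $\lambda(v,v,a,a)=1$ half of synchrony (not just the $a\ne b$ vanishing half) is the one place where carelessness could slip in, and you handle it correctly.
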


In general the flip of a graph homomorphism game $Hom(G,H)$ is not the game $Hom(H,G)$.  In fact one finds that for the flipped game,  
\[\{ (h,h',g,g'): \tilde{\lambda}(h,h',g,g')=0 \} \supseteq \{ (h, h', g, g'): \mu(g,g',h,h')=0 \} \] 
 and generally strictly larger, where the second set is the tuples that are 0 for $Hom(H,G) := ( V(H), V(G), \mu)$.  In this sense the flipped game has more "rules" that must be obeyed. The following proposition reflects this fact. We mention here the definition of non-signaling densities. 
 
 Recall that we say a game $\cl G$ admits a perfect non-signaling strategy if the corresponding probability density $\{p(a,b|x,y)\}_{a.b.x.y}$ satisfy the following properties:
 \begin{enumerate}
 \item $p(a,b|x,y)\geq 0$ for all $x,y,a,b$ and $\sum_{a,b}p(a,b|x,y)=1$.
 \\
 \item $\sum_{b}p(a,b|x,y)=\sum_{b}p(a,b|x,y')$ for all $y,y'$.
 \\
 \item $\sum_{a}p(a,b|x,y)=\sum_{a}p(a,b|x',y)$ for all $x,x'$.
 \end{enumerate}
We denote the set of non-signaling densities as $C_{nsb}$ and recall from the Equation \ref{Eq-heirarchy} that every class of densities in our context is a subset of $C_{nsb}$.
\begin{prop}
Let $\cl G=Hom(G,H)$, where $G$ is a complete graph and  $H$ is not a complete graph. Then $\tilde{\mathcal{G}}$ has no perfect non-signaling strategy and $\cl A(\tilde{\cl G}) = (0).$
\end{prop}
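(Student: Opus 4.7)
The plan is to exploit the asymmetry created by flipping $Hom(K_n, H)$: in the original game the question structure is ``totally edge'' (every pair of distinct questions is an edge of $K_n$), while the answer structure is only partial (some pairs of distinct vertices in $H$ are non-edges). After flipping, this partial answer structure becomes a question structure, and I will show this produces questions for which \emph{no} answer pair is allowed.

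First I would write out the flipped predicate explicitly. With $\cl G = Hom(K_n, H) = (V(K_n), V(H), \lambda)$, the flip $\tilde{\cl G}$ has input set $V(H)$, output set $V(K_n) = \{1,\dots,n\}$, and $\tilde\lambda(a,b,x,y) = \lambda(x,y,a,b)$. Because $H$ is not complete, I choose a pair $a\ne b$ in $V(H)$ with $(a,b)\notin E(H)$. I then check the two cases for the outputs $(x,y)$ in $\tilde{\cl G}$: if $x\ne y$ then $(x,y)\in E(K_n)$, so the $Hom(K_n,H)$ rule forces $\lambda(x,y,a,b)=0$ since $(a,b)\notin E(H)$; if $x=y$ then synchronicity of $\cl G$ and $a\ne b$ gives $\lambda(x,x,a,b)=0$. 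Hence $\tilde\lambda(a,b,x,y)=0$ for \emph{every} $(x,y)$.

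The two conclusions now fall out immediately. For the non-signaling part: any perfect strategy $\tilde p$ for $\tilde{\cl G}$ must satisfy $\tilde p(x,y|a,b)=0$ whenever $\tilde\lambda(a,b,x,y)=0$, so by the previous paragraph $\tilde p(x,y|a,b)=0$ for all $x,y$, contradicting the normalization $\sum_{x,y}\tilde p(x,y|a,b)=1$. For the algebra part: the defining relations of $\cl A(\tilde{\cl G})$ give $\tilde e_{a,x}\tilde e_{b,y}=0$ for all $x,y$, so
\[
\tilde e_{a,x} \;=\; \tilde e_{a,x}\!\sum_y \tilde e_{b,y} \;=\; \sum_y \tilde e_{a,x}\tilde e_{b,y} \;=\; 0
\]
for every $x$, whence $1 = \sum_x \tilde e_{a,x}=0$ and $\cl A(\tilde{\cl G})=(0)$.

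There is essentially no analytic obstacle here; the main point is conceptual, namely spotting that the complete-graph input side of $\cl G$ forces the flipped game to treat every non-edge of $H$ as a pair of incompatible questions. Care is only needed to verify both the $x\ne y$ and $x=y$ sub-cases, so that the vanishing of $\tilde\lambda(a,b,\cdot,\cdot)$ is truly total, which is what drives both the contradiction with normalization and the collapse of the unit in the game $\ast$-algebra.
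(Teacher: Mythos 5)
Your proof is correct and follows essentially the same route as the paper's: pick a non-edge $(a,b)$ in $H$, observe that completeness of $G$ (for $x\ne y$) together with synchronicity (for $x=y$) forces $\tilde\lambda(a,b,x,y)=0$ for every output pair, then derive the contradiction with normalization and the collapse $1=0$ in the game algebra. The only cosmetic difference is that the paper kills the identity directly by computing $1=\bigl(\sum_a e_{x,a}\bigr)\bigl(\sum_b e_{y,b}\bigr)=0$, whereas you first show each generator $\tilde e_{a,x}$ vanishes and then sum; both are the same calculation rearranged.
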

\begin{proof}
Since $H$ is not a complete graph, there exist $x,y\in V(H)$ such that $x \ne y$ and $(x,y)$ is not an edge. For such a pair $x,y$ we see for any $a,b\in V(G)$
\[\tilde{\lambda}(x,y,a,b)=\lambda(a,b,x,y)=0.\]
The above follows from the fact that $G$ is a complete graph.
If  $p(a,b |x,y)$ is any conditional probability density that satisfies $\tilde{\lambda}(x,y,a,b) = 0 \implies p(a,b |x,y)=0$ then for this particular pair $x,y$ we would have that
$p(a,b |x,y)=0, \, \forall a,b$, contradicting the fact that $1 = \sum_{a,b} p(a,b | x,y)$.

To see the second claim, note that for this pair $x,y$ we have that the corresponding generators  satisfy $e_{x,a}e_{y,b}=0$ for any $a,b$. Hence, 
\[ 1= \big( \sum_a e_{x,a} \big) \big( \sum_b e_{y,b} \big) = \sum_{a,b} e_{x,a}e_{y,b} =0,\]
i.e. the identity must belong to the ideal generated by the relations and hence the algebra is $(0)$.
\end{proof}
\begin{thm} Let $t \in \{ loc, q, qa, qc, vect \}$ and let $p\in C_t^{bs}(n,n)$, then $q(x,y|a,b) := p(a,b | x,y) \in C_t^{bs}(n,n)$.
\end{thm}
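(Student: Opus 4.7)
My plan is to use the characterizations established in Theorem~\ref{thm-bs-charact.} case-by-case, exploiting a single structural observation: the defining relations of $\cl O(S_n^+)$ are symmetric under transposing the two indices. Concretely, the assignment $u_{i,j} \mapsto u_{j,i}$ respects projection-ness and sends the row-sum relations of $\cl O(S_n^+)$ to the column-sum relations and vice versa, so by universality it extends to a unital *-homomorphism $\phi: \cl O(S_n^+) \to \cl O(S_n^+)$ with $\phi \circ \phi = \text{id}$, i.e., a *-automorphism. Equivalently, the transpose of a quantum permutation is a quantum permutation.

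For the $qc$ case, start from a trace $\tau$ on $\cl O(S_n^+)$ with $p(a,b|x,y) = \tau(u_{x,a} u_{y,b})$ guaranteed by Theorem~\ref{thm-bs-charact.}(1). Set $\tau' = \tau \circ \phi$; this is again a trace on $\cl O(S_n^+)$ and satisfies
\[ \tau'(u_{a,x} u_{b,y}) = \tau(u_{x,a} u_{y,b}) = p(a,b|x,y) = q(x,y|a,b), \]
so $q \in C_{qc}^{bs}(n,n)$ by the converse direction of Theorem~\ref{thm-bs-charact.}(1). For the $qa$, $q$, and $loc$ cases the identical maneuver works: if $\pi: \cl O(S_n^+) \to \cl B$ is the unital *-homomorphism into the relevant algebra ($\cl R^\omega$, a finite dimensional C*-algebra, or an abelian C*-algebra) and $\tau$ is the associated trace or state on $\cl B$, then $\pi \circ \phi$ is a unital *-homomorphism into the same class of algebras, and
\[ \tau((\pi \circ \phi)(u_{a,x} u_{b,y})) = \tau(\pi(u_{x,a} u_{y,b})) = q(x,y|a,b), \]
placing $q$ in the appropriate class via parts (2), (3), (4) of Theorem~\ref{thm-bs-charact.}.

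For the $vect$ case, use the vectors $\{h_{x,a}\}$ from Theorem~\ref{thm-bs-charact.}(5) and define $k_{a,x} := h_{x,a}$. The conditions listed in (5) are invariant under interchanging the two indices: $k_{a,x} \perp k_{b,x}$ for $a \ne b$ translates to $h_{x,a} \perp h_{x,b}$; $k_{a,x} \perp k_{a,y}$ for $x \ne y$ translates to $h_{x,a} \perp h_{y,a}$; and the equal row/column sums $\sum_x k_{a,x} = \sum_b k_{b,x} = h$ are the same identities with the roles of $a$ and $x$ swapped. Hence $\langle k_{a,x}, k_{b,y} \rangle = \langle h_{x,a}, h_{y,b} \rangle = q(x,y|a,b)$ realizes $q$ as an element of $C_{vect}^{bs}(n,n)$.

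As a preliminary sanity check, one should verify that $q$ is in fact a conditional probability density and is bisynchronous. Non-negativity is immediate; normalization $\sum_{x,y} q(x,y|a,b) = 1$ follows from either the trace representation (since $\sum_x u_{x,a} = \sum_y u_{y,b} = 1$) or directly from the vectorial form with $h$ a unit vector. Bisynchronicity of $q$ translates, under the flip $q(x,y|a,b) = p(a,b|x,y)$, into the synchronous condition $p(a,b|x,x) = 0$ for $a \ne b$ and the bisynchronous condition $p(a,a|x,y) = 0$ for $x \ne y$, both of which hold by hypothesis. There is no real obstacle here — the content of the theorem is that the symmetry between inputs and outputs which is implicit in the bisynchronous axioms is already built into the characterizations via quantum permutations.
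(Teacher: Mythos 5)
Your proof is correct and is exactly the argument the paper intends; the paper's own proof is the one-line remark ``The result follows by invoking the characterisations of elements of $C_t^{bs}(n,n)$ given in Theorem~\ref{thm-bs-charact.},'' and you have simply supplied the details, with the transpose *-automorphism $u_{i,j} \mapsto u_{j,i}$ of $\cl O(S_n^+)$ (and the index swap in the vectorial case) being the right way to make the invocation precise.
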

\begin{proof} The result follows by invoking the  characterisations of elements of $C_t^{bs}(n,n)$ given in Theorem \ref{thm-bs-charact.}.
\end{proof}

\begin{remark}\label{nscounterex} The above result does not hold for {\it nonsignalling} densities, except in the case $n=2$. Let $p(a,b|x,y)$ for $x,y,a,b \in \bb Z_3$ be given by
\[ p(a,b|x,x) = \begin{cases} 1/3, & a=b \\ 0, & a \ne b \end{cases} \text{ and for } x \ne y, \, p(a,b| x,y) = \begin{cases} 1/3, & a-b = 1 \\ 0, & a-b \ne 1 \end{cases}. \]
Then it is easily checked that $p \in C_{ns}^{bs}(3,3)$. However,  $q(x,y|a,b) := p(a,b|x,y)$ has the property that $q(x,y| 2,0) =0, \, \forall x,y$ and so is not even a probability density.
 \end{remark}




\section{CP maps and bisynchronous correlations}
In this section we look at a certain class of completely positive linear maps which arise from  bisynchronous densities. We follow the idea introduced by Ortiz-Paulsen (see \cite{OP}) for the graph homomorphism game. This idea was further exploited in \cite{mancinscaetal2}.

Let $t \in \{ loc, q, qa, qc, vect, ns \}$.
Given any correlation $p(a,b| x,y) \in C_t(n,k)$, where $x,y \in [n]$ and $a,b \in [k]$ we have a map
\begin{equation}\label{eq-fact-map}
 \Phi_p: M_n \to M_k,   \Phi_p(E_{x,y}) = \sum_{a,b} p(a,b | x,y) E_{a,b}.
\end{equation}
Here $E_{x,y}= \ket{x} \bra{y}$ and $E_{a,b}= \ket{a} \bra{b}$ denote the matrix unit basis for the domain and range, respectively.
We shall call such a map a {\bf t-map}. Note that the density uniquely determines the map $\Phi_p$, i.e., $\Phi_p=  \Phi_q \iff p(a,b|x,y) = q(a,b|x,y), \forall x,y,a,b$. 
\begin{thm}\rm{[Ortiz-Paulsen, see \cite{OP}]}
If $p\in C_{vect}^s(n,k)$ is a synchronous correlation, then the map $\Phi_p$ defined in the Equation (\ref{eq-fact-map}) is completely positive.  
\end{thm}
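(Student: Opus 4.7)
The plan is to verify complete positivity via the Choi--Jamio{\l}kowski isomorphism: $\Phi_p$ is completely positive if and only if its Choi matrix
\[
C_{\Phi_p} \;=\; \sum_{x,y} E_{x,y} \otimes \Phi_p(E_{x,y}) \;=\; \sum_{x,y,a,b} p(a,b|x,y)\, E_{x,y} \otimes E_{a,b} \;\in\; M_n \otimes M_k
\]
is positive semidefinite. So the goal reduces to showing $C_{\Phi_p} \ge 0$.

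First, I would invoke the vectorial characterization of synchronous densities used later in the excerpt (attributed to \cite{NGHA,NPA,OP}): since $p \in C^s_{vect}(n,k)$, there exist vectors $\{h_{x,a}\}$ in some Hilbert space $\cl H$ with $\sum_a h_{x,a} = h$ a fixed unit vector, $h_{x,a} \perp h_{x,b}$ for $a \ne b$, and
\[
p(a,b|x,y) \;=\; \langle h_{x,a}, h_{y,b} \rangle.
\]
Substituting this into the Choi matrix, its $((x,a),(y,b))$ entry is exactly $\langle h_{x,a}, h_{y,b}\rangle$. Hence $C_{\Phi_p}$ is the Gram matrix of the family $\{h_{x,a}\}_{(x,a) \in [n] \times [k]}$ indexed by pairs.

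Since any Gram matrix is positive semidefinite, $C_{\Phi_p} \ge 0$, and by Choi's theorem $\Phi_p$ is completely positive, proving the claim. There is no genuine obstacle here; the work is entirely in recognizing the Choi matrix as a Gram matrix. I would remark that the same calculation shows that the orthogonality relations $h_{x,a} \perp h_{x,b}$ and the normalization $\sum_a h_{x,a} = h$ are not needed for positivity itself — they simply ensure that $\Phi_p$ is moreover unital and trace-preserving (equivalently, that the row/column sums of each $\Phi_p(E_{x,x})$ behave correctly) — so the synchronous hypothesis enters only through the existence of the vector representation $p(a,b|x,y) = \langle h_{x,a}, h_{y,b}\rangle$.
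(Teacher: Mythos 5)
Your proof is correct and takes essentially the same route as the paper: both verify that the Choi matrix is positive semidefinite via the vectorial representation $p(a,b|x,y)=\langle h_{x,a},h_{y,b}\rangle$, and your reformulation of that check as ``the Choi matrix is the Gram matrix of $\{h_{x,a}\}$'' is just a cleaner way of stating the computation $\langle C_p\xi,\xi\rangle = \big\|\sum_{x,a}\overline{\lambda_{x,a}}\,h_{x,a}\big\|^2\ge 0$. You are also right that the positivity step does not actually use the orthogonality relations, despite the paper's phrasing; however, your closing aside overclaims what those relations do buy. The conditions $h_{x,a}\perp h_{x,b}$ and $\sum_a h_{x,a}=h$ encode synchronicity, which gives $\mathrm{Tr}\,\Phi_p(E_{x,x})=\sum_a\|h_{x,a}\|^2=1$, but they do \emph{not} force $\mathrm{Tr}\,\Phi_p(E_{x,y})=0$ for $x\ne y$ nor $\sum_x\|h_{x,a}\|^2=1$; those require bisynchronicity (and, for unitality, $n=k$), as in the paper's subsequent proposition. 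So $\Phi_p$ being a unital channel is not a consequence of the synchronous vectorial data alone.
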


The above assertion about $\Phi_p$ follows from the fact that the Choi matrix of $\Phi_p$ is a positive semidefinite matrix in $M_n\otimes M_k$. Indeed, since $p\in C_{vect}^s(n,k)$, there exists a Hilbert space $\mathcal{H}$ with vectors $\{h_{x,a}\}$ satisfying the relations $h_{x,a}\perp h_{x,b}, \forall x, \forall a\neq b$ and $\sum_{a}h_{x,a}=\sum_{b}h_{y,b}$, for all $x,y$, such that
\[p(a,b|x,y)=\langle h_{x,a},h_{y,b}\rangle.\] 
Now let $C_{p}:=\sum_{x,y}E_{x,y}\otimes \Phi_p(E_{x,y})$ denote the Choi matrix of $\Phi_p$. If $\{e_{x} \}$ and $\{f_{a}\}$ denote the canonical orthonormal basis for $\mathbb{C}^n$ and $\mathbb{C}^k$ respectively, then for any vector $\xi=\sum_{x,a}\lambda_{x,a}e_{x}\otimes f_{a}\in \mathbb{C}^n\otimes \mathbb{C}^k$ using the above orthogonality relations it follows (see \cite{OP}) that 
\[\langle C_p \xi,\xi\rangle\geq 0.\]

We remark that there are $p \in C^s_{ns}(n,k)$ for which $\Phi_p$ is not completely positive. For example, setting
\begin{multline*} p(0,0|x,y) = p(1,1|x,y) = 1/2, p(0,1 |x,y) = p(1,0 | x,y) =0 \, \forall (x,y) \ne (1,1), \\ p(0,0 | 1,1) = p(1,1|1,1) =0, p(0,1|1,1) = p(1,0 | 1,1) = 1/2,\end{multline*}
we have that $p \in C^s_{ns}(2,2)$ and $\Phi_p$ is not completely positive. 


When $p(a,b|x,y)$ is bisynchronous (respectively, synchronous), we call such a map a {\bf bisynchronous t-map} 
(respectively, {\bf synchronous t-map}). It turns out that for bisynchronous densities, the corresponding map has many interesting properties.

\begin{prop} Let $p \in C_{vect}(n,k)$ be a bisynchronous density, then the map $\Phi_p$
 is a channel (that is, trace preserving and completely positive). If, in addition, $n=k$, then $\Phi_p$ is a unital channel and, if $J$ is the all 1 matrix, then $\Phi_p(J)=J$. Moreover, if $\sigma:M_n\rightarrow \mathbb{C}$ is the functional defined by $\sigma((a_{ij}))=\sum_{i,j}a_{i,j}$, then $\sigma(\Phi_p(A))=\sigma(A)$, for all $A\in M_n$.
\end{prop}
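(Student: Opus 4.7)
The plan is to verify each assertion directly, invoking the Ortiz--Paulsen result for complete positivity and the vectorial characterisation of Theorem~\ref{thm-bs-charact.}(5) to handle the $n=k$ identities. The main observation is that every required property is an immediate consequence of the row/column symmetry of the ``vector permutation'' matrix $(h_{x,a})$ exhibited there.

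For the first part, since bisynchronicity implies synchronicity, the Ortiz--Paulsen theorem quoted above gives that $\Phi_p$ is completely positive. For trace preservation, I would simply compute
\[ \mathrm{tr}(\Phi_p(E_{x,y})) = \sum_{a,b} p(a,b|x,y)\,\mathrm{tr}(E_{a,b}) = \sum_{a} p(a,a|x,y). \]
When $x=y$, synchronicity reduces the right-hand side to $\sum_{a,b} p(a,b|x,x) = 1$; when $x\ne y$, bisynchronicity makes each term vanish. In both cases the value equals $\delta_{x,y}=\mathrm{tr}(E_{x,y})$, establishing trace preservation with no hypothesis on $n,k$.

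Now assume $n=k$. By Theorem~\ref{thm-bs-charact.}(5), I would pick a Hilbert space $\cl H$ and vectors $\{h_{x,a}\}$ satisfying $p(a,b|x,y)=\langle h_{x,a},h_{y,b}\rangle$ together with the row- and column-orthogonality relations and $\sum_{b} h_{x,b}=\sum_{y} h_{y,a}=h$ for a unit vector $h$. For unitality, the synchronous collapse $\Phi_p(E_{x,x})=\sum_{a} p(a,a|x,x)E_{a,a}$ together with column-orthogonality gives
\[ \sum_{x} p(a,a|x,x) = \sum_{x} \|h_{x,a}\|^2 = \Big\|\sum_{x} h_{x,a}\Big\|^2 = \|h\|^2 = 1, \]
so $\Phi_p(I)=\sum_{a} E_{a,a}=I$. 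For $\Phi_p(J)=J$, linearity and the same inner-product formula give the coefficient of $E_{a,b}$ in $\Phi_p(J)$ as $\sum_{x,y}\langle h_{x,a},h_{y,b}\rangle=\langle h,h\rangle=1$, so $\Phi_p(J)=\sum_{a,b}E_{a,b}=J$. Finally, for $\sigma$, since $\sigma(E_{a,b})=1$ for all $a,b$, we have $\sigma(\Phi_p(E_{x,y}))=\sum_{a,b} p(a,b|x,y)=1=\sigma(E_{x,y})$, and linearity yields $\sigma\circ\Phi_p=\sigma$. There is no genuine obstacle here: all the real content sits in the identity $\sum_{x} h_{x,a}=h$ already proved (via Cauchy--Schwarz) inside Theorem~\ref{thm-bs-charact.}(5), and the rest is bookkeeping.
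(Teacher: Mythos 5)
Your proof is correct and follows essentially the same route as the paper: direct computation on matrix units, with the key identities $\sum_x p(a,a|x,x)=1$ and $\sum_{x,y}p(a,b|x,y)=1$ supplied by the row/column structure. The only (favorable) difference is that you justify $\sum_x p(a,a|x,x)=1$ cleanly via the vectorial characterization of Theorem~\ref{thm-bs-charact.}(5), whereas the paper invokes the algebraic relation $\sum_x e_{x,a}=1$, which is somewhat informal for a density only assumed to lie in $C_{vect}$.
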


\begin{proof}
The complete positivity follows from exactly similar arguments outlined in \cite{OP}. For trace preserving property, let $x=y$. then
\[Tr(\Phi_p(E_{x,x}))=Tr(\sum_{a,b} p(a,b|x,x)E_{a,b})=\sum_a p(a,a|x,x)=1=Tr(E_{x,x}).\]
Now if $x\neq y$, then 
\begin{align*}
Tr(\Phi_p(E_{x,y}))&=Tr(\sum_{a,b} p(a,b|x,y)E_{a,b})\\
&=Tr(\sum_{a,b(a\neq b)} p(a,b|x,y)E_{a,b})\\
&=0\\
&=Tr(E_{x,y}).
\end{align*} 
Now for $n=k$, we get 
\begin{align*}
\Phi_p(1)&=\Phi_p(\sum_x E_{x,x})\\
&=\sum_{x,a,b} p(a,b|x,x)E_{a,b}\\
&=\sum_{x,a}p(a,a|x,x)E_{a,a}=(\sum_a E_{a,a}\sum_{x}p(a,a|x,x))=\sum_a E_{a,a}=1 
\end{align*}
Here we used $\sum_{x}p(a,a|x,x))=1$ which follows from the fact that $\sum_x e_{x,a}=1$.
Hence the map is a unital and trace preserving map (unital channel).

Now for the last assertion, write $J=\sum_{x,y}E_{x,y}$. Then we calculate
\[\Phi_p(J)=\sum_{x,y}\sum_{a,b}p(a,b|x,y)E_{x,y}=(\sum_{a,b}E_{a,b})(\sum_{x,y}p(a,b|x,y))=\sum_{a,b}E_{a,b}=J.\]
The last assertion can be verified for the matrix units and by linearity it is enough to get the result. Indeed, 
\[\sigma(\Phi_p(E_{x,y}))=\sigma(\sum_{a,b}p(a,b|x,y)E_{a,b})=\sum_{a,b}p(a,b|x,y)=1=\sigma(E_{x,y}).\]
\end{proof}

\begin{remark} The map $\Phi_p$ for the bisynchronous density $p$ given in Remark~\ref{nscounterex}
is easily seen to not even be positive. In fact, if we let $J$ denote the $3 \times 3$ matrix of all 1's, then
\[ \Phi_p(J) = I_3 +2S_3,\]
where $S_3$ denotes the map which cyclically permutes the canonical basis, i.e., $S_3(e_j) = e_{j+1}, \, j \in \bb Z_3$.
\end{remark}

We have the following "composition" rule as in \cite[Proposition~3.5]{OP}.

\begin{prop}
Let $t\in \{loc, q, qa, qc,vect\}$. If $p(x,y|v,w)\in C_t(n,k)$ and $q(a,b|x,y)\in C_t(k,l)$ then
\[r(a,b|v,w):=\sum_{x,y}q(a,b|x,y)p(x,y|v,w)\in C_t(n,l),\]
and $\Phi_r = \Phi_q \circ \Phi_p$. 
Moreover, if $p$ and $q$ are synchronous or bisynchronous, then so is $r$. 
\end{prop}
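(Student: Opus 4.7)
The plan is to dispose of the algebraic identity first, then reduce the membership in $C_t$ to the known composition rule from \cite{OP}, and finally verify the synchronous and bisynchronous conditions by elementary substitution.

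First I would verify $\Phi_r = \Phi_q \circ \Phi_p$ by a direct computation on the matrix-unit basis: expanding $\Phi_q(\Phi_p(E_{v,w})) = \Phi_q\bigl(\sum_{x,y} p(x,y|v,w)E_{x,y}\bigr) = \sum_{x,y,a,b} q(a,b|x,y)p(x,y|v,w)E_{a,b}$, which by definition of $r$ equals $\Phi_r(E_{v,w})$. This identity is not actually needed for the rest of the argument, but it explains the naturality of the definition of $r$.

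Next I would argue $r \in C_t(n,l)$. This part is the main technical obstacle, but it is essentially Proposition~3.5 of \cite{OP} applied at each level of the hierarchy. For $t = loc$ one writes $p$ and $q$ as convex combinations of product distributions and checks that $r$ inherits such a decomposition. For $t \in \{q, qa, qc\}$ one uses the POVM (respectively, projection-valued) representations of $p$ and $q$ on the appropriate Hilbert spaces to manufacture POVMs that realize $r$; in the commuting-operator case one uses a tensor product of the underlying Hilbert spaces and shows that the commutation between the "$A$-side" and "$B$-side" operators is preserved after the substitution. For $t = vect$ one uses the vector characterization from \cite{NGHA, NPA} and composes the vector systems in the same way. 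Since this is exactly the construction carried out in \cite[Proposition~3.5]{OP}, I would cite it rather than repeat the four cases.

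The remaining assertions about synchronicity and bisynchronicity then follow by direct case analysis on the defining formula $r(a,b|v,w) = \sum_{x,y} q(a,b|x,y)\,p(x,y|v,w)$. If $p$ and $q$ are synchronous and $a \neq b$, then in computing $r(a,b|v,v)$ the synchronicity of $p$ collapses the sum to diagonal terms $\sum_x q(a,b|x,x)\,p(x,x|v,v)$, each of which vanishes by the synchronicity of $q$. If in addition both are bisynchronous, then for $v \neq w$ the bisynchronicity of $p$ forces $p(x,x|v,w)=0$, leaving $r(a,a|v,w) = \sum_{x \neq y} q(a,a|x,y)\,p(x,y|v,w)$, and each summand vanishes because $q$ is bisynchronous. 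These two computations are the only points where the hypotheses on $p$ and $q$ are used in tandem, and both are one-line substitutions once the definitions are unwound.
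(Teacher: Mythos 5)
Your proof is correct and follows essentially the same route as the paper: cite \cite[Proposition~3.5]{OP} for membership in $C_t$ and for the composition rule, then verify the (bi)synchronicity of $r$ by the one-line substitutions you give (the paper only writes out the bisynchronous case and defers the rest to the same reference).
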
 
\begin{proof} 
Here we only show the bisynchronous property. The other assertions are contained in \cite[Proposition~3.5]{OP}. 
Thus, all that we need to show is if $v\neq w$, then $r(a,a|v,w)=0$.
We compute
\[r(a,a|v,w)=\sum_{x,y}q(a,a|x,y)p(x,y|v,w)=\sum_{x,y(x\neq y)}q(a,a|x,y)p(x,y|v,w)=0.\]
The first equality follow from the fact that $p$ is bisynchronous. The second one follows because $q$ is bisynchronous.

Finally, the composition rule is easily checked.
\end{proof}

\section{Factorizability of bisynchronous maps}

Following Anatharaman-Delaroche \cite{delaroche} and Haagerup-Musat (see \cite{hm11},\cite{hm15}), a map $\Phi:M_n\rightarrow M_n$ is called {\it factorizable} if there exists a von Neumann algebra $N$ with a trace $\tau_N$ and a unitary $u\in M_n\otimes N$ such that
\[\Phi(X)=id\otimes {\tau_N}(u^*(X\otimes 1_N)u),\forall X\in M_n.\]
the algebra $N$ is often called the {\bf ancilla}. Because
we would like to distinguish between different types of ancillas, we adopt the following notations:
\begin{itemize}
\item when $N$ is abelian, we call $\Phi$ {\bf loc-factorizable},
\item when $N$ is finite dimensional, we call $\Phi$ {\bf q-factorizable},
\item when $(N, \tau)$ has a trace preserving embedding into an ultrapower of the hyperfinite $II_1$-factor, we call $\Phi$ {\bf qa-factorizable},
\item we call $\Phi$ {\bf qc-factorizable} when it is factorizable.
\end{itemize}
Note that in \cite{MR18}, the set of factorizable maps on $M_n$ that require a finite dimensional ancilla is denoted by $\mathcal{F}\mathcal{M}_{fin}(n)$. Thus, a map $\Phi: M_n \to M_n$  is q-factorizable in our language if and only if $\Phi \in \cl F \cl M_{fin}(n).$

When a map is t-factorizable and the unitary $u$ can also be chosen to be a quantum permutation, then we call $u$ a {\bf quantum t-permutation} and will say that the map is {\bf t-factorizable via a quantum permutation.}

\begin{thm}\label{prop:factorizable}
Let $t \in \{ loc, q, qa, qc \}$, let $p\in C_{t}(n,n)$  be a bisynchronous density and let $\Phi_p: M_n \to M_n$ be the associated map. Then $\Phi_p$ is  t-factorizable via a quantum permutation.
Conversely, if $\Phi: M_n \to M_n$ is  t-factorizable via a quantum permutation and we write \[\Phi(E_{x,y}) = \sum_{a,b} p(a,b| x,y) E_{a,b},\] 
then $p \in C_t^{bs}(n,n)$ and $\Phi = \Phi_p$.
\end{thm}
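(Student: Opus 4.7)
The plan is to prove both implications by a single direct matrix computation, using Theorem~\ref{thm-bs-charact.} as the bridge between bisynchronous densities and quantum permutations. The key identity I will use in both directions is that if $\{w_{x,a}\}$ is a magic unitary in a unital algebra $N$ and $U := \sum_{x,a} E_{x,a} \otimes w_{x,a} \in M_n \otimes N$, then expanding $U^*(E_{v,w} \otimes 1_N)U$ and using $E_{a,x}E_{v,w}E_{y,b} = \delta_{x,v}\delta_{w,y}E_{a,b}$ collapses the sum to $\sum_{a,b} E_{a,b} \otimes w_{v,a}w_{w,b}$.

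For the forward direction, I would take $p \in C_t^{bs}(n,n)$ and invoke the appropriate clause of Theorem~\ref{thm-bs-charact.} to obtain a $*$-homomorphism $\pi:\cl O(S_n^+) \to N$ into an algebra $N$ carrying a trace $\tau_N$, with $N$ abelian for $loc$, finite dimensional for $q$, trace-preservingly embedded in $\cl R^{\omega}$ for $qa$, and arbitrary for $qc$ (taking a GNS closure when I need $N$ to be a von Neumann algebra). Setting $w_{x,a}:=\pi(u_{x,a})$, the magic unitary relations make $U = \sum_{x,a} E_{x,a} \otimes w_{x,a}$ a unitary in $M_n \otimes N$, and the key identity combined with the formula $p(a,b|v,w) = \tau_N(w_{v,a}w_{w,b})$ from Theorem~\ref{thm-bs-charact.} gives $(\mathrm{id} \otimes \tau_N)(U^*(E_{v,w}\otimes 1_N)U) = \Phi_p(E_{v,w})$. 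This displays $\Phi_p$ as $t$-factorizable via the quantum permutation $U$.

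For the converse, I would start with a $t$-factorizable $\Phi: M_n \to M_n$ via a quantum permutation $U \in M_n \otimes N$, expand $U = \sum_{x,a}E_{x,a} \otimes w_{x,a}$ with $\{w_{x,a}\}$ a magic unitary in $N$, and apply the same identity to read off $\Phi(E_{x,y}) = \sum_{a,b} \tau_N(w_{x,a}w_{y,b})E_{a,b}$. Comparing with $\Phi(E_{x,y}) = \sum_{a,b}p(a,b|x,y)E_{a,b}$ forces $p(a,b|x,y) = \tau_N(w_{x,a}w_{y,b})$ and $\Phi = \Phi_p$. Non-negativity of $p$ will follow from the trace identity $\tau_N(w_{x,a}w_{y,b}) = \tau_N(w_{x,a}w_{y,b}w_{x,a}) \geq 0$, using $w_{x,a}^2 = w_{x,a}$ and positivity of $w_{x,a}w_{y,b}w_{x,a}$; normalization $\sum_{a,b}p(a,b|x,y) = \tau_N(1_N) = 1$ comes from the row and column sum relations; and bisynchronicity is immediate from the two orthogonality relations $w_{x,a}w_{x,b}=0$ for $a\neq b$ and $w_{x,a}w_{y,a}=0$ for $x\neq y$. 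The universal property of $\cl O(S_n^+)$ then promotes $\{w_{x,a}\}$ to a $*$-homomorphism $\cl O(S_n^+) \to N$, and feeding this into the matching clause of Theorem~\ref{thm-bs-charact.} places $p$ in $C_t^{bs}(n,n)$.

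The main obstacle is bookkeeping rather than substance: in each of the four cases I must verify that the structural hypothesis on the ancilla in the definition of $t$-factorizability matches the structural hypothesis on the target algebra in the corresponding clause of Theorem~\ref{thm-bs-charact.} (abelian, finite dimensional, $\cl R^{\omega}$-embeddable, or arbitrary tracial von Neumann). The most delicate match-up is $qc$, where I may need to pass from the C*-algebraic trace supplied by Theorem~\ref{thm-bs-charact.} to its GNS von Neumann completion, which remains tracial, in order that the ancilla produced in the forward direction is a von Neumann algebra in the sense of the definition of factorizability.
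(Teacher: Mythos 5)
Your proposal is correct and follows essentially the same route as the paper: use Theorem~\ref{thm-bs-charact.} to produce a quantum permutation whose trace realizes $p$, pass to a von Neumann completion where needed, and verify the factorization formula on matrix units by the collapse identity $U^*(E_{v,w}\otimes 1)U=\sum_{a,b}E_{a,b}\otimes w_{v,a}w_{w,b}$, with the converse reading off $p(a,b|x,y)=\tau_N(w_{x,a}w_{y,b})$ and feeding back through the characterization theorem. The paper writes out only the $qc$ case and is terser in the converse (leaving the verification that the trace of a quantum permutation yields a bisynchronous density implicit), while you spell out the four cases uniformly and check non-negativity, normalization, and bisynchronicity explicitly; this is a presentational rather than a mathematical difference.
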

\begin{proof}
We only prove the ``qc" case here. The other cases are similar.   Assume that $p \in C_{qc}^{bs}(n,n)$. It is clear from the previous section that the map $\Phi_p$ is a unital channel. Now since $p$ is bisynchronous, from the Theorem \ref{thm-bs-charact.} there is a set of projections $\{e_{x,a}\}$ in a C$^*$-algebra $\mathcal{A}$ with a trace $\tau$ such that $u=(e_{x,a})_{x,a}$ is a quantum permutation in $M_n\otimes \mathcal{A}$ with 
\[p(a,b|x,y)=\tau(e_{x,a}e_{y,b}).\]
 Embedding $(\mathcal{A},\tau)$ into its double dual $\mathcal{A}^{**}$ if necessary, we get a von Neumann algebra $N$ with a trace $\tau_N$ and a set of projections in $N$ which we still call $\{e_{x,a}\}$ and moreover, these projections form a unitary $u=(e_{x,a})$ which is a quantum permutation in $M_n\otimes N$. 
Now we show that \[\Phi_p(X)=id\otimes {\tau_N}(u^*(X\otimes 1_N)u),\forall X\in M_n.\] To this end, we evaluate the map $X\mapsto id\otimes {\tau_N}(u^*(X\otimes 1_N)u)$ on matrix units. Let's relabel the indices and write $u=\sum_{i,j=1}^n  E_{i,j}\otimes e_{i,j}$ and we calculate for any $E_{x,y}$
\begin{align*}
id\otimes {\tau_N}(u^*(E_{x,y}\otimes 1_N)u)&=id\otimes{\tau_N}(\sum_{ij} E_{ij}\otimes e_{j,i}(E_{x,y}\otimes 1_N)\sum_{k,l}E_{k,l}\otimes e_{k,l})\\
&=id\otimes \tau_N(\sum_{i,k,l,j=x}E_{i,y}E_{k,l}\otimes e_{x,i}e_{k,l})\\
&= id\otimes \tau_N(\sum_{i,l}E_{i,l}\otimes e_{x,i}e_{y,l})\\
&=\sum_{i,l}\tau_N(e_{x,i}e_{y,l})E_{i,l}\\
&=\sum_{i,l}p(i,l|x,y)E_{i,l}\\
&=\Phi_p(E_{x,y}).
\end{align*}

Conversely,  assume that  $\Phi: M_n \to M_n$ is qc-factorizable via a quantum permutation $u=( e_{x,a}) \in M_n \otimes N$ where $N$ is a von Neuamnn algebra with a trace $\tau$.
Factorizability implies
 \begin{equation}\label{Equ-fact}
 \Phi(X)=id\otimes \tau(u^*(X\otimes 1_N)u)  \forall X\in M_n.
 \end{equation}
 
Evaluating $E_{x,y}$ in the Equation \ref{Equ-fact} in one hand and then comparing with the expression $\Phi(E_{x,y}) = \sum_{a,b} p(a,b|x,y) E_{a,b}$ on the other hand, we see that
\[ p(a,b|x,y) = \tau(e_{x,a}e_{y,b}).\]
Now using the fact that $u=(e_{x,a})$ is a quantum permutation, we have that $p \in C_{qc}^{bs}(n,n).$
\end{proof}

We now characterize the various t-factorizable maps. Note that a unital trace preserving completely positive map  of the form
\[X\mapsto \sum_{j=1}^k \lambda_j U_j^*XU_j,\]
is called a mixed unitary map where all the $U_j$'s are unitaries, $\lambda_j\geq 0$ for all $j$ and $\sum_j\lambda_j=1$.

If the unitaries of a mixed unitary map can be chosen to be permutations, then we call such a map a \textbf{mixed permutation map}.
Note that the set of mixed unitary maps is precisely the class of factorizable maps that admit an abelian ancilla, cf. \cite{hm11}. The following theorem exhibits similar characterization of the ancilla for the maps associated to the classical bisynchronous densities.
\begin{thm}
Let $p(a,b|x,y)$ be a bisynchronous density on $n$ inputs and $n$ outputs.  Then 
 $p\in C_{loc}^{bs}(n,n)$ if and only if  $\Phi_p:M_n \to M_n$ is a mixed permutation map. 
\end{thm}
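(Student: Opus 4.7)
The plan is to combine Theorem~\ref{thm-bs-charact.}(4) with a spectral/characterization argument showing that any magic unitary living in an abelian C*-algebra is essentially a convex combination of classical permutations.

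For the forward direction, suppose $p \in C_{loc}^{bs}(n,n)$. By Theorem~\ref{thm-bs-charact.}(4) there is an abelian unital C*-algebra $\cl A$, a unital *-homomorphism $\pi: \cl O(S_n^+) \to \cl A$, and a state $\tau$ on $\cl A$ such that $p(a,b|x,y) = \tau(\pi(u_{x,a})\pi(u_{y,b}))$. Since only finitely many projections $e_{x,a} := \pi(u_{x,a})$ are involved, I may replace $\cl A$ by the finite-dimensional commutative C*-subalgebra they generate, so without loss of generality $\cl A = \bb C^m$ with characters $\chi_1,\dots,\chi_m$. At each character $\chi_j$ the values $\chi_j(e_{x,a})$ lie in $\{0,1\}$, and the magic unitary relations (row sums and column sums equal $1$, row/column entries pairwise orthogonal) force that there is a unique $a$ with $\chi_j(e_{x,a})=1$ for each $x$, and a unique $x$ with $\chi_j(e_{x,a})=1$ for each $a$. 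Thus each $\chi_j$ determines a permutation $\sigma_j \in S_n$ with $\chi_j(e_{x,a}) = \delta_{a,\sigma_j(x)}$.

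Next I would write $\tau = \sum_j \lambda_j \chi_j$ with $\lambda_j \ge 0, \sum_j \lambda_j=1$, yielding
\[ p(a,b|x,y) = \sum_j \lambda_j \,\delta_{a,\sigma_j(x)}\,\delta_{b,\sigma_j(y)}. \]
Letting $P_{\sigma_j}$ be the permutation matrix associated to $\sigma_j$ (so that $P_{\sigma_j} E_{x,y} P_{\sigma_j}^* = E_{\sigma_j(x),\sigma_j(y)}$), a direct computation gives
\[ \Phi_p(E_{x,y}) = \sum_{a,b} p(a,b|x,y) E_{a,b} = \sum_j \lambda_j E_{\sigma_j(x),\sigma_j(y)} = \sum_j \lambda_j P_{\sigma_j} E_{x,y} P_{\sigma_j}^*. \]
By linearity $\Phi_p(X) = \sum_j \lambda_j P_{\sigma_j} X P_{\sigma_j}^*$, exhibiting $\Phi_p$ as a mixed permutation map.

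For the converse, assume $\Phi_p(X) = \sum_j \lambda_j P_{\sigma_j}^* X P_{\sigma_j}$ for permutations $\sigma_j$ and probabilities $\lambda_j$. Evaluating on matrix units and matching coefficients gives $p(a,b|x,y) = \sum_j \lambda_j \delta_{a,\sigma_j(x)} \delta_{b,\sigma_j(y)}$. This is manifestly a convex combination of deterministic product strategies, hence lies in $C_{loc}(n,n)$; and it is visibly bisynchronous because each $\sigma_j$ is a bijection ($\delta_{a,\sigma_j(x)}\delta_{a,\sigma_j(y)}=0$ when $x\ne y$, and $\delta_{a,\sigma_j(x)}\delta_{b,\sigma_j(x)}=0$ when $a\ne b$).

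There is no serious obstacle here; the only substantive step is the identification of pure-state evaluations of a magic unitary in an abelian algebra with honest permutations, which I would state cleanly as a lemma (or inline observation) before doing the matrix-unit computation. Everything else is routine bookkeeping between densities and their Choi representations.
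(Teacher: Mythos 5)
Your proof is correct, and for the direction $p\in C_{loc}^{bs}(n,n)\implies \Phi_p$ is a mixed permutation map, you take a genuinely different route than the paper. The paper works directly from the probability-space description of local synchronous densities: Alice and Bob share $(\Omega,\mu)$ and a single function $F:[n]\times\Omega\to[n]$ with $p(a,b|x,y)=\mu(\{w:F(x,w)=a,\,F(y,w)=b\})$; bisynchronicity is then used to show $F(\cdot,w)$ is injective (hence bijective) off a null set, and $\Phi_p$ becomes the integral of $U(w)^*E_{x,y}U(w)$ over permutation matrices $U(w)$, which collapses to a finite convex combination. You instead invoke Theorem~\ref{thm-bs-charact.}(4) to obtain $p(a,b|x,y)=\tau(\pi(u_{x,a})\pi(u_{y,b}))$ with $\pi$ into an abelian C*-algebra, restrict to the finite-dimensional (Boolean) subalgebra generated by the finitely many commuting projections $\pi(u_{x,a})$, and read off permutations from the characters of $\mathbb{C}^m$; the decomposition $\tau=\sum_j\lambda_j\chi_j$ then immediately gives the mixed-permutation form. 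Your argument is a clean corollary of the earlier characterization theorem and makes explicit that ``abelian quantum permutations are classical permutations''; the paper's argument is more elementary and self-contained, not relying on the $\mathcal{O}(S_n^+)$ machinery. For the other implication your computation matches the paper's essentially verbatim, including the observation that $U^*E_{x,y}U=E_{\sigma(x),\sigma(y)}$ gives a convex combination of deterministic (hence local) bisynchronous correlations.
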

\begin{proof}
 Let $\Phi_p(E_{x,y}):=\sum_{a,b}p(a,b|x,y)E_{a,b}=\sum_{j=1}^k \lambda_jU_j^*E_{x,y}U_j$,  where $U_j$'s are some permutations on $[n]$ and $\sum_j \lambda_j=1$. Note that for any permutation $\sigma$ on $[n]$ and the corresponding permutation  matrix $U=\sum_{z}E_{z,\sigma(z)}$, we have 
\[U^*E_{x,y}U=\sum_{z,w}E_{\sigma(z),z}E_{x,y}E_{w,\sigma(w)}=E_{\sigma(x),\sigma(y)}.\] 
This means that 
\[p(a,b|x,y)=\begin{cases}1 \ if \ \exists \ a \ permutation \ \ \sigma:  \sigma(x)=a,\sigma(y)=b, \\ 
0 \ otherwise.
  \end{cases}\]
  This is a deterministic correlation (see section 9.4 in \cite{vp}). Since the local-correlations are convex combinations of deterministic correlations, it follows that $p\in C_{loc}^{bs}(n,n)$. 

Conversely, let $p\in C_{loc}^{bs}(n,n)$. This means that Alice and Bob share some probability space $(\Omega,\mu)$ and a function $F:[n]\times \Omega\rightarrow [n]$ such that 
\[p(a,b|x,y)=\mu (\{w: F(x,w)=a, F(y,w)=b\}).\]
Now since $p$ is a bisynchronous density, it follows that almost surely $F$ is a bijective function on $[n]$. Indeed, if $x\neq y$ and $a=b$, then
\[0=p(a,a|x,y)=\mu (\{w: F(x,w)=a, F(y,w)=a\}).\]
This shows that the set where $F$ fails to be injective has measure zero. Now taking union of all the input and output sets which is a finite set, we get finite union of measure zero sets which is again a measure zero set. 
Let's call this set $\mathcal{Z}$.
Now for $w\in \Omega\setminus \mathcal{Z}$, we set $U(w)=\sum_{x}E_{x,F(x,w)}$. This is a permutation matrix. Now notice that 
\begin{align*}\Phi_p(E_{x,y})&=\sum_{a,b}p(a,b|x,y)E_{a,b}\\
&= \sum_{a,b} \int_{\{w:F(x,w)=a, F(y,w)=b\}}E_{F(x,w),F(y,w)}dw\\
&=\sum_{a,b}\int_{\{w:F(x,w)=a, F(y,w)=b\}}U(w)^*E_{x,y}U(w)dw \\
&= \int_{\Omega} U(w)^* E_{x,y} U(w) dw.
\end{align*}   
Now since there are only finitely many permutation matrices, the above integral is actually a convex sum. Since the matrix units generate the matrix algebra, we get $\Phi_p$ is a mixed permutation map.

 
\end{proof}

\subsection{The graph isomorphism game}
 Here we study the map $\Phi_p$ arising from a perfect strategy $p$ for the graph isomorphism game, introduced in \cite{mancinscaetal}. This map was also studied in  \cite{mancinscaetal2} and using this completely positive map, the authors obtained some very interesting results namely they have shown that various kinds of isomorphisms of two graphs is equivalent to having an isomorphism between the {\it coherent algebras associated to these graphs}. Here, we consider the factorizability properties of these maps. 
 
  In the following theorem we exhibit an interesting connection between perfect strategies of the graph isomorphism game and factorizable maps factoring via a quantum permutation. For two vertices $x,y$ in a graph $G$, we will use the symbol $x\sim y$ to denote that $(x,y)\in E(G)$, that is $(x,y)$ is an edge. Likewise, $x\nsim y$ will mean $(x,y)$ is not an edge.
  
\begin{thm}\label{thm-graph-iso-factorizable}
Let $G$ and $H$ be two graphs with $n$ vertices and adjacency matrices $A_G,A_H$, respectively, and let $t\in \{loc,q,qa,qc \}$. Then the following statements are equivalent:
\begin{enumerate}
\item $G\cong_{t} H$,
\item there exists a quantum t-permutation $(e_{g,h}):g \in V(G), h \in V(H)$ such that
\[(A_G\otimes 1)(e_{g,h})=( e_{g,h})(A_H\otimes 1).\] 

\item  there is a map $\Phi:M_n \to M_n$  that is t-factorizable map via a quantum permutation $u=(e_{x,a})$
with the property that 
$\Phi(A_G)=A_H$ and $\Phi^*(A_H)=A_G$, where $\Phi^*$ is the adjoint of $\Phi$. 
 
\end{enumerate}
\end{thm}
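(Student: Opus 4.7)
My plan is to prove the equivalences cyclically, $(1)\Leftrightarrow (2)\Leftrightarrow (3)$. The equivalence of (1) and (2) is essentially a specialisation of the characterisation of quantum isomorphism developed in \cite{lupini, mancinscaetal2, brannanetal}: since $Iso(G,H)$ is a bisynchronous game, Theorem~\ref{thm-bs-charact.} allows one to extract a quantum $t$-permutation $u=(e_{g,h})$ from any perfect $t$-strategy, and the edge-preservation rules of the isomorphism game translate exactly into the intertwining $(A_G\otimes 1)u = u(A_H\otimes 1)$. I would simply cite this and record the dictionary. The substantive content of the theorem is the equivalence $(2)\Leftrightarrow (3)$, which I would prove directly.

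For $(2)\Rightarrow(3)$, set $\Phi(X):= id\otimes\tau(u^*(X\otimes 1)u)$. By Theorem~\ref{prop:factorizable} this map is $t$-factorizable via the quantum permutation $u$. Since $u$ is a unitary, the intertwining in (2) is equivalent to $u^*(A_G\otimes 1)u = A_H\otimes 1$; applying $id\otimes\tau$ immediately gives $\Phi(A_G)=A_H$. A standard Hilbert--Schmidt calculation yields the adjoint formula $\Phi^*(Y) = id\otimes\tau(u(Y\otimes 1)u^*)$, and the equivalent form $u(A_H\otimes 1)u^* = A_G\otimes 1$ of the intertwining then gives $\Phi^*(A_H)=A_G$.

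The hard direction is $(3)\Rightarrow(2)$. Writing $p(a,b|x,y):=\tau(e_{x,a}e_{y,b})$, so that $\Phi=\Phi_p$, the identity $\Phi(A_G)=A_H$ unpacks entrywise as $\sum_{x\sim_G y}p(a,b|x,y) = (A_H)_{a,b}$; positivity of $p$ then forces $p(a,b|x,y)=0$ whenever $x\sim_G y$ and $a\nsim_H b$, and symmetrically $\Phi^*(A_H)=A_G$ forces $p(a,b|x,y)=0$ whenever $a\sim_H b$ and $x\nsim_G y$. Because the $e_{x,a}$ are projections, cyclicity and idempotency give $\tau((e_{x,a}e_{y,b})^*(e_{x,a}e_{y,b})) = \tau(e_{y,b}e_{x,a}e_{y,b}) = \tau(e_{x,a}e_{y,b})$, so after passing to the GNS representation of $\tau$ (where the induced trace is faithful) each such trace-level vanishing lifts to the operator identity $e_{x,a}e_{y,b}=0$. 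It then remains to compute the $(j,n)$-block of $u^*(A_G\otimes 1)u$, which works out to $\sum_{i\sim_G l}e_{i,j}e_{l,n}$, and to check it equals $(A_H)_{j,n}\cdot 1$. For $j\nsim_H n$ every summand vanishes by the first batch of lifted identities. For $j\sim_H n$ (so $j\neq n$) I would start from the magic-permutation column relations $\sum_{i,l}e_{i,j}e_{l,n} = \bigl(\sum_i e_{i,j}\bigr)\bigl(\sum_l e_{l,n}\bigr) = 1$ and subtract off the diagonal contributions $i=l$ (which vanish by same-row orthogonality because $j\neq n$) together with the off-diagonal contributions with $i\nsim_G l$ and $i\neq l$ (which vanish by the second batch), leaving $\sum_{i\sim_G l}e_{i,j}e_{l,n}=1$. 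This establishes $u^*(A_G\otimes 1)u = A_H\otimes 1$, i.e.\ the intertwining in (2). The main obstacle is precisely this lifting step, where trace-zero identities must be promoted to operator-level identities; it is handled by arranging faithfulness via the GNS construction, after which the remaining verification is careful bookkeeping with the magic-permutation relations and the bisynchronicity of $p$.
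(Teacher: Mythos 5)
Your proposal is correct, and for two of the three implications you do essentially what the paper does: $(1)\Leftrightarrow(2)$ by citing the characterisation of quantum graph isomorphism from \cite{brannanetal}, and $(2)\Rightarrow(3)$ by evaluating $\Phi(X)=id\otimes\tau(u^*(X\otimes 1)u)$ on $A_G$ and on $A_H$ through $\Phi^*$ (the paper writes this in terms of the flipped generators $e_{a,x}:=e_{x,a}$; your Hilbert--Schmidt derivation of $\Phi^*(Y)=id\otimes\tau(u(Y\otimes 1)u^*)$ is the same thing in different clothes). The genuine divergence is in how the cycle is closed: the paper proves $(3)\Rightarrow(1)$ by extracting the trace-level vanishings $\tau(e_{x,a}e_{y,b})=0$ for the three edge/non-edge mismatches and declaring that the winning conditions of $Iso(G,H)$ are therefore satisfied, whereas you prove $(3)\Rightarrow(2)$ directly. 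Your route requires an extra step the paper leaves implicit --- promoting each trace-zero relation $\tau(e_{x,a}e_{y,b})=0$ to the operator identity $e_{x,a}e_{y,b}=0$ via $\tau((e_{x,a}e_{y,b})^*(e_{x,a}e_{y,b}))=\tau(e_{x,a}e_{y,b})$ and faithfulness in the GNS picture --- but once that is in hand, the block computation of $u^*(A_G\otimes 1)u$ using the magic-permutation row and column sums gives the intertwining outright. What your approach buys is self-containedness: you never need to re-assemble a perfect strategy for the full $2n$-input game $Iso(G,H)$ (whose inputs are $V(G)\sqcup V(H)$) from the $n\times n$ quantum permutation indexed by $V(G)\times V(H)$, which is precisely the content of the cited \cite[Theorem 4.9]{brannanetal}; the paper's $(3)\Rightarrow(1)$ implicitly re-invokes that machinery, whereas you land directly on the intertwining condition (2). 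The price is that you must be explicit about lifting trace-level identities to operator identities, a point worth keeping even if one follows the paper's route, since the winning-conditions verification there is also only at the level of the trace. One small imprecision: you attribute the extraction of the $n\times n$ quantum permutation to Theorem~\ref{thm-bs-charact.}, but that theorem produces a $2n\times 2n$ quantum permutation for $Iso(G,H)$; the reduction to the $n\times n$ intertwiner is the part that genuinely needs \cite{brannanetal}, which you do also cite.
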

\begin{proof}
$1\Leftrightarrow 2$ follows from \cite[Theorem 4.9]{brannanetal} and \cite[Remark 2.5]{brannanetal}.

(\textbf{${2\Rightarrow 3}$})
We only do the case that t=qc.
Without loss of generality, embedding $\mathcal{A}$ into its double dual if necessary, we get a von Neumann algebra $N$ with a trace $\tau_N$, where the projections $\{e_{x,a}\}$ live in such that the magic permutation $u=(e_{x,a})\in M_n\otimes N$ intertwines the adjacency matrices. Now defining the map $\Phi(x)=id\otimes \tau_N (u^*(x\otimes 1_N)u)$, it is easy to see that
\[\Phi(A_G)=id\otimes \tau_N(u^*(A_G\otimes 1_N)u)=id\otimes \tau_N(A_H\otimes 1_N)=A_H.\] 
Also, following Theorem \ref{prop:factorizable} we get $\Phi(E_{x,y})=\sum_{a,b} \tau_N(e_{x,a}e_{y,b})E_{a,b}$. By symmetry, it follows that \[\Phi^*(E_{a,b})=\sum_{x,y}\tau_N(e_{a,x}e_{b,y})E_{x,y}.\]
And hence it is easy to see that $\Phi^*(A_H)=A_G$. Here we define $e_{a,x}=e_{x,a}$ and the corresponding densities arise from the flip of a original game introduced in subsection \ref{subsection-flip}.

($3\Rightarrow 1$) Here we will prove that all the winning conditions for the graph isomorphism game can be recovered from the data given in 3. We think of the input indices $x,y$ are taken from the graph $G$ and the outcome indices $a,b$ are from the graph $H$. 

Since we have $\Phi(A_G)=A_H$, we get 
\begin{align*}
\Phi(A_G)=\sum_{x\sim y}\Phi(E_{xy})&=\sum_{x\sim y}\sum_{a,b}\tau_N(e_{x,a}e_{y,b})E_{ab}\\
&=\sum_{x\sim y}\sum_{a\sim b}\tau_N(e_{x,a}e_{y,b})E_{ab}+\sum_{x\sim y}\sum_{a\nsim b}\tau_N(e_{x,a}e_{y,b})E_{ab}\\
&=A_H.
\end{align*}
Since the coefficients $\tau_N(e_{x,a}e_{y,b})$ are all positive numbers for any $x,a$, we see that if $x\sim y$ and $a\nsim b$, we must have $p(a,b|x,y)=\tau_N(e_{x,a}e_{y,b})=0$. So the connected edges must go to the connected edges. 

Now using the relation $\Phi^*(A_H)=A_G$ of the adjoint map, a similar argument shows that if $x\nsim y$ and $a\sim b$, we must have $\tau_N(e_{a,x}e_{b,y})=0$. Now using the flip of the graph isomorphism game we have $e_{x,a}=e_{a,x}$ and hence we get $p(a,b|x,y)=\tau_N(e_{x,a}e_{y,b})=0$. So the disconnected edges must go to the disconnected edges.

Note that if $x=y$, then 
\begin{align*}
\Phi(1)=\sum_{x}\Phi(E_{xx})&=\sum_{x}\sum_{a,b}\tau_N(e_{x,a}e_{x,b})E_{ab}\\
&=\sum_{x}\sum_{a}\tau_N(e_{x,a}e_{x,a})E_{aa}+\sum_{x\sim y}\sum_{a\neq b}\tau_N(e_{x,a}e_{x,b})E_{ab}\\
&=1.
\end{align*}
This shows that if $a\neq b$, then $p(a,b|x,x)=\tau_N(e_{x,a}e_{x,b})=0$. This shows that if the two inputs are same, then the two outputs can not be different.

Hence all the ``relations" in the graph isomorphism game are preserved if we have the assertion 3. Hence 1 follows.  
\end{proof}

\begin{remark}
Note that in \cite{hm11} and \cite{hm15}, in all the examples given for factorizable maps, the associated von Neumann algebras could be taken to be finite dimensional. In a very recent article, Musat and R\o rdam (\cite{MR18}), established examples of factorizable maps that do not admit factorization through any finite dimensional von Neumanna algebra and do require ancillas of type $\rm{II_1}$. 

 Note that in Corollary 7.4 of the article \cite{mancinscaetal}, the authors show that there exists graphs $G$ and $H$ on $n$ vertices such that $G\cong _{qc}H$ but $G\ncong_{q}H$. This was refined somewhat in \cite{KPS} where it was shown that Slofstra's example of a linear BCS game with a perfect qa-strategy but no perfect q-strategy \cite{Sl}, yields a pair of graphs such that $G \cong_{qa} H$ but $G \ncong_q H$.
 
 This latter fact implies by Theorem~\ref{thm-bs-charact.} that there are projections $\{e_{x,a}\}_{x,a}$ in an ultrapower of the hyperfinite $II_1$-factor $\cl R^{\omega}$ with trace $\tau_{\omega}$, which form a quantum permutation $u=(e_{x,a})$ such that  $(A_G\otimes 1)u=u(A_H\otimes 1)$.  Moreover, no quantum permutation that can be formed with projections coming from a finite dimensional C$^*$-algebra can satisfy this intertwining property.  
 
 If we consider the corresponding factorizable map on $M_n$ given by
 \[\Phi(X)= id\otimes \tau_{\omega}(u^*(X\otimes 1_{\cl R^{\omega}})u),\]
 then by Theorem \ref{thm-graph-iso-factorizable}, it is clear that $\Phi$ can not admit a factorization through any finite dimensional von Neumann algebra as long as the unitary $u=(u_{i,j})$ implementing the factorization is required to be a quantum permutation.   However, we have not been able to show that this map cannot factorize through a finite dimensional von Neumann algebra. 
\end{remark}
 \subsection{The Orbital algebra and fixed point algebra}
 Lupini et al. in \cite{lupini} start with a subgroup $\bb G$ of the quantum permutation group $\cl O(S_n^+)$ and the quantum permutation $u=(u_{ij})_{i,j=1}^n$ that generates the subgroup and define a subalgebra of $M_n$ that they call the {\it orbital subalgebra}. They 
 prove that this algebra is equal to the set of matrices $A \in M_n$ such that
 \[ (A \otimes 1) u = u(A \otimes 1).\]
 Moreover, they show that this algebra is also closed under the Schur product of matrices.
 We note that not every quantum permutation generates a subgroup. As seen in the work of Lupini et al, the quantum permutation needs to satisfy some additional properties.
 
 
 
 
 In this subsection, we consider the case of an arbitrary bisynchronous density $p$ for $n$ inputs and $n$ outputs. By our characterization theorem, each such density comes from a (possibly non-unique) quantum permutation.
 We give a new characterization of the above "commutant", that applies to more general quantum permutations.
 
 Note that to perform the multiplication $(A \otimes 1)(u_{x,a}) = (u_{x,a}) (A \otimes 1)$ we need to not just have that the input and output sets have the same size, but to have them identified as the same set.  This just corresponds to fixing an identification $I= O = \{ 1,..., n\}$, so that we may write our bisynchronous correlation as $p(a,b | x,y), 1 \le a,b,x,y \le n$.
 
  The new condition is related to the fixed points of the bisynchronous map $\Phi_p$. Note that since $\Phi_p$ is a unital and trace preserving completely positive map, the fixed point set 
 \[Fix(\Phi_p):=\{A\in M_n: \Phi_p(A)=A\},\]
 is a subalgebra of $M_n$ and moreover, if $\Phi_p$ is represented with $m$ linearly independent Kraus operators $\{K_i\}$:
 \[\Phi_p(X)=\sum_{i=1}^m K_i^*XK_i, \ K_i\in M_n \ \forall i,\]
 then $A\in Fix(\Phi_p)$ if and only if $AK_i=K_iA, \  and \  AK_i^*=K_i^*A \ \forall i$ by a result of Kribs \cite{kribs}.
  
 \begin{thm}
 Let $p\in C_{qc}^{bs}(n,n)$, let $\Phi_p: M_n \to M_n$ be the associated bisynchronous map, let $u=(e_{x,a})_{a,x=1}^n$ be a quantum permutation and let $\tau$ be a faithful trace on the algebra $\cl A$ generated by the entries of $u$ such that $p(a,b|x,y) = \tau(e_{x,a} e_{y,b})$. Then the following statements about a matrix $A\in M_n$ are equivalent:
 \begin{enumerate}
 \item $(A\otimes 1_{\cl A})u=u(A \otimes 1_{\cl A}),$
 \item $A\in Fix(\Phi_p)$,
 \item $A=(a_{i,j})$ with $a_{i,j}= a_{k,l}$ whenever $e_{i,k}e_{j,l} \ne 0$.
 \end{enumerate}
 \end{thm}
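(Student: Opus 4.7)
The plan is to show $(1) \Rightarrow (2)$, $(2) \Rightarrow (1)$, and $(1) \Leftrightarrow (3)$. The direction $(1) \Rightarrow (2)$ is immediate: if $(A\otimes 1_{\cl A}) u = u(A \otimes 1_{\cl A})$, then since $u$ is unitary $u^*(A\otimes 1_{\cl A})u = A\otimes 1_{\cl A}$, and applying $\text{id}\otimes \tau$ gives $\Phi_p(A) = A$.

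For $(2) \Rightarrow (1)$ I will combine the Kadison--Schwarz inequality with the faithfulness of $\tau$. Since $\Phi_p$ is a unital channel (by Proposition~3.3) it preserves the normalized trace on $M_n$; so if $\Phi_p(A) = A$, then Schwarz gives $\Phi_p(A^*A) \ge A^*A$, and equality of the matrix traces forces $\Phi_p(A^*A) = A^*A$. Setting $T := u^*(A\otimes 1_{\cl A})u$ and $E := \text{id}\otimes \tau$, this equality rewrites as $E(T^*T) = E(T)^*E(T)$. Regarding $\widetilde E(Y) := E(Y) \otimes 1_{\cl A}$ as the conditional expectation of $M_n \otimes \cl A$ onto $M_n \otimes 1_{\cl A}$, the identity $\widetilde E\bigl((T - \widetilde E(T))^*(T - \widetilde E(T))\bigr) = 0$ combined with faithfulness of $\tau$ (which is exactly what makes $\widetilde E$ faithful on positive elements) forces $T = \widetilde E(T) = A\otimes 1_{\cl A}$, i.e., $(A\otimes 1_{\cl A})u = u(A\otimes 1_{\cl A})$. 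The main obstacle is precisely this last faithfulness step; everything else is formal manipulation.

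For $(1)\Leftrightarrow (3)$ the argument is by direct expansion. Writing $u = \sum_{i,j} E_{i,j}\otimes e_{i,j}$ and $A\otimes 1_{\cl A} = \sum_{i,j} a_{i,j}\,E_{i,j}\otimes 1_{\cl A}$, comparison of the $E_{m,n}$--coefficients in $(A\otimes 1_{\cl A})u = u(A\otimes 1_{\cl A})$ turns condition (1) into the system of relations
\[ \sum_j a_{m,j}\, e_{j,n} \;=\; \sum_j a_{j,n}\, e_{m,j} \qquad \text{for all } m,n. \]
The key tool is that the quantum-permutation relations $\sum_\alpha e_{m,\alpha} = 1 = \sum_\beta e_{\beta,n}$ force $\{e_{m,\alpha}\}_\alpha$ and $\{e_{\beta,n}\}_\beta$ each to be a partition of $1_{\cl A}$ by pairwise orthogonal projections. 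Compressing the displayed identity by $e_{m,\alpha}$ on the left and $e_{\beta,n}$ on the right then collapses each side to a single surviving term and produces $(a_{m,\beta} - a_{\alpha,n})\, e_{m,\alpha}\, e_{\beta,n} = 0$; after relabelling $(i,j,k,l) = (m,\beta,\alpha,n)$ this is exactly condition (3). Conversely, given (3), summing the compressed identity over $\alpha$ and $\beta$ and using the row/column sum relations rebuilds the full identity, so (3) implies (1).
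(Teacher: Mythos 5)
Your proof of $(1)\Rightarrow(2)$ and $(1)\Leftrightarrow(3)$ is essentially the paper's argument: the implication $(1)\Rightarrow(2)$ is the same application of $\mathrm{id}\otimes\tau$ to $u^*(A\otimes 1)u = A\otimes 1$, and the compression of the commutation identity by row and column projections to extract the scalar relations $a_{i,j}=a_{k,l}$ is exactly what the paper does (the paper works with the $(i,l)$-entry directly; you work with the $E_{m,n}$-coefficients and then relabel, but it is the same computation).

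Your proof of $(2)\Rightarrow(1)$, however, is a genuinely different route, and a correct one. The paper invokes Haagerup--Musat's structure theorem for factorizable maps (writing $u=\sum_i K_i\otimes v_i$ with linearly independent Kraus operators $K_i$ and $\tau(v_i^*v_j)=\delta_{ij}$) together with Kribs' characterization $\mathrm{Fix}(\Phi_p)=\{K_i,K_i^*\}'$, and then observes that commuting with every $K_i$ gives commutation with $u$. Your argument instead runs the Kadison--Schwarz inequality $\Phi_p(A^*A)\ge A^*A$ against trace preservation to force $\Phi_p(A^*A)=A^*A$, translates this into the Pythagorean identity $\widetilde E(T^*T)=\widetilde E(T)^*\widetilde E(T)$ for the trace-preserving conditional expectation $\widetilde E$ onto $M_n\otimes\mathbb{C}1_{\cl A}$, and then uses faithfulness of $\widetilde E$ (inherited from faithfulness of $\tau$) to conclude $T=\widetilde E(T)=A\otimes 1$. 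This is the standard ``multiplicative domain'' argument for a unital, trace-preserving CP map. What it buys you: it is self-contained and does not require the Haagerup--Musat or Kribs theorems — only the elementary Schwarz inequality and the faithfulness of the conditional expectation; moreover it makes transparent exactly where faithfulness of $\tau$ is used. What the paper's route buys: it exhibits explicitly that the Kraus operators live in $M_n\otimes 1$ inside the unitary, which is useful elsewhere in the paper. Both are valid; yours is arguably the cleaner proof of this particular implication.
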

 \begin{proof}
 Note that $(1)\implies (2)$ follows from the fact that $\Phi_p$ is factorizable. Indeed, from Theorem \ref{prop:factorizable} it follows that 
 \[\Phi_p(X)=id\otimes \tau(u^*(X\otimes 1)u),\forall X\in M_n.\]
  Now if $A\in M_n$
satisfies the condition in $(1)$, then it is easily seen that $\Phi_p(A)=A$.

To see that (2) implies (1), let $\Phi_p(A)=A$ and suppose the Choi representation of $\Phi_p$ is given by
\[\Phi_p(X)=\sum_{i=1}^m K_i^*XK_i, \ K_i\in M_n \ \forall i,\]
where $\{K_i\}_{i=1}^m$'s are linearly independent.
Then since $\Phi_p$ is a factorizable map, following Theorem 2.2 in \cite{hm11}, there exist operators $v_1,\cdots,v_m\in \cl A$ such that $u=\sum_{i=1}^m K_i\otimes v_i$ and $\tau(v_i^*v_j)=\delta_{ij}, 1\leq i,j\leq m.$

Now since $\Phi_p(A)=A$ if and only if $AK_i=K_iA$, for all $i=1,\cdots,m$, we have 
\[(A\otimes 1)u=(A\otimes 1)(\sum_{i=1}^m K_i\otimes v_i)=(\sum_{i=1}^m K_i\otimes v_i)(A\otimes 1)=u(A\otimes 1).\]

Finally, we prove that $(A \otimes 1) (e_{x,a}) = (e_{x,a})(A \otimes 1)$ if and only if $A= (a_{i,j})$ satisfies the condition of (3).
First assume that $e_{i,k}e_{j,l} \ne 0$. Considering the $(i,l)$-entry of the product we have that
\[ e_{i,k} ((A \otimes 1)(e_{x,y}))_{i,l} e_{j,l} = e_{i,k} ( \sum_r a_{i,r} e_{r,l}) e_{j,l} = e_{i,k} a_{i,j} e_{j,l},\]  since $e_{r,l} e_{j,l} =0$ unless, $r=j$.
On the other hand,
\[ e_{i,k} ((e_{x,y})(1 \otimes A))_{i,l} e_{j,l} = e_{i,k} (\sum_r e_{i,r} a_{r,l}) e_{j,l} = e_{i,k} a_{k,l} e_{j,l}.\]

Thus, if the commutation holds, then since the product $e_{i,k} e_{j,l} \ne 0$, we must have that
$a_{i,j} = a_{k,l}$.

Conversely, assume that the set of equalities holds. 
The $(i,k)$-th entry of  $u^*(A \otimes 1) u$ is
\[  \sum_{r,l} e_{l,i} a_{l,r} e_{r,k} = a_{i,k}( \sum_{l,r} e_{l,i} e_{r,k}) =a_{i,k} 1,\]
where we have used that any time that $e_{l,i}e_{r,k} \ne 0$ then $a_{l,r} = a_{i,k}$ and the fact that
$\sum_l e_{l,i} = \sum_r e_{r,k} =1$. From this calculation it follows that $u^*(A \otimes 1) u = A \otimes 1$, and hence, $(A \otimes 1) u = u(A \otimes 1)$, since $u$ is a unitary. 
\end{proof}
\begin{cor} Let $p\in C^{bs}_{qc}(n,n)$ then the fixed point algebra $Fix(\Phi_p)$  is closed under Schur product.
\end{cor}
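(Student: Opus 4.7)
The plan is to read off the result directly from the characterization in part (3) of the preceding theorem. Given $p \in C_{qc}^{bs}(n,n)$, fix the quantum permutation $u = (e_{x,a})$ and faithful trace $\tau$ supplied by Theorem~\ref{thm-bs-charact.}, so that membership in $\mathrm{Fix}(\Phi_p)$ is equivalent to the pointwise compatibility condition
\[ A = (a_{i,j}) \in \mathrm{Fix}(\Phi_p) \iff \bigl( e_{i,k} e_{j,l} \ne 0 \implies a_{i,j} = a_{k,l} \bigr). \]

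Given two fixed-point matrices $A = (a_{i,j})$ and $B = (b_{i,j})$, I will form the Schur product $A \circ B = (a_{i,j} b_{i,j})$ and verify that it satisfies the same compatibility condition. Indeed, if $e_{i,k} e_{j,l} \ne 0$, then applying condition (3) to $A$ gives $a_{i,j} = a_{k,l}$ and applying it to $B$ gives $b_{i,j} = b_{k,l}$; multiplying these scalar equations yields $a_{i,j} b_{i,j} = a_{k,l} b_{k,l}$, which is exactly condition (3) for $A \circ B$. Hence $A \circ B \in \mathrm{Fix}(\Phi_p)$.

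There is essentially no obstacle here beyond recognizing that characterization (3) is expressed entrywise, so it is automatically preserved by any entrywise operation, in particular by the Schur product. The only thing to note is that one is not claiming $\mathrm{Fix}(\Phi_p)$ is a $C^*$-algebra under Schur product (there is no Schur identity unless one works in a larger ambient space), but only linear-space closure under the Schur operation, which is what the corollary asserts.
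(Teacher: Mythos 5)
Your argument is exactly the paper's: invoke characterization (3) of the preceding theorem, observe that if $e_{i,k}e_{j,l}\ne 0$ forces $a_{i,j}=a_{k,l}$ and $b_{i,j}=b_{k,l}$ then it also forces $a_{i,j}b_{i,j}=a_{k,l}b_{k,l}$, so the Schur product again satisfies (3). The paper states this in one line; you have simply spelled out the same step.
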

\begin{proof} If the matrices $A=(a_{i,j})$ and $B= (b_{i,j})$ satisfy the conditions of (3), then so does the matrix $(a_{i,j}b_{i,j}).$
\end{proof}

The above proof of the equivalence of (2) and (3) borrows heavily from the proofs in Lupini et al (see \cite{lupini} ), but differs in several distinct ways. Their proof starts by defining equivalence relations on the set $[n]$ and $[n] \times [n]$ induced by the matrix $u$. In our more general setting, these are no longer equivalence relations.

The above result also shows a small amount of uniqueness among the quantum permutations that can be used to represent a bisynchronous correlation.

 \begin{cor} Let $p \in C^b_{qc}(n,n)$, let $\cl A, \cl B$ be C*-algebras with faithful traces $\tau$ and $\rho$, respectively, and let $u_{i,j} \in \cl A$ and $v_{i,j} \in \cl B$ form quantum permutations such that $p(a,b|x,y) = \tau(u_{x,a}u_{y,b}) = \rho(v_{x,a}v_{y,b})$.
 Then for $A \in M_n$ we have that
 \[ (A \otimes 1_{\cl A}) (u_{x,a}) = (u_{x,a})(A \otimes 1_{\cl A}) \iff (A \otimes 1_{\cl B}) (v_{x,a}) = (v_{x,a})(A \otimes 1_{\cl B}).\]
 \end{cor}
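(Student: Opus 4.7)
The plan is to read the conclusion off directly from the preceding theorem. That theorem provides, for any bisynchronous density $p \in C_{qc}^{bs}(n,n)$ and any quantum permutation $(e_{x,a})$ in a C*-algebra $\cl A$ with a faithful trace $\tau$ representing $p$ via $p(a,b|x,y)=\tau(e_{x,a}e_{y,b})$, the equivalence between condition (1) (the commutation $(A\otimes 1_{\cl A})(e_{x,a})=(e_{x,a})(A\otimes 1_{\cl A})$) and condition (2) ($A\in \mathrm{Fix}(\Phi_p)$). The crucial observation is that condition (2) depends only on the density $p$ itself, since $\Phi_p$ is defined directly from $p$ by the formula $\Phi_p(E_{x,y})=\sum_{a,b}p(a,b|x,y)E_{a,b}$ and makes no reference to the particular choice of quantum permutation implementing $p$.

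So the argument would go as follows. First, apply the preceding theorem to the data $(\cl A,\tau,(u_{x,a}))$: this gives that
\[ (A\otimes 1_{\cl A})(u_{x,a})=(u_{x,a})(A\otimes 1_{\cl A}) \iff A\in \mathrm{Fix}(\Phi_p).\]
Next, apply the same theorem to the data $(\cl B,\rho,(v_{x,a}))$, which by hypothesis also represents $p$ via $p(a,b|x,y)=\rho(v_{x,a}v_{y,b})$; this yields
\[ (A\otimes 1_{\cl B})(v_{x,a})=(v_{x,a})(A\otimes 1_{\cl B}) \iff A\in \mathrm{Fix}(\Phi_p).\]
Chaining these two equivalences through the common middle condition $A\in \mathrm{Fix}(\Phi_p)$ gives the desired biconditional.

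The only thing to double-check is that both representations produce \emph{the same} map $\Phi_p$, which is immediate from $\Phi_p$ being determined by its matrix entries $p(a,b|x,y)$ and from the fact that, by hypothesis, the two traces recover the same numbers $p(a,b|x,y)$. There is no substantive obstacle here; the corollary is essentially an observation that the intrinsic object $\mathrm{Fix}(\Phi_p)$ furnishes a representation-independent description of the commutation relation, so the argument is a one-line application of the previous theorem used twice.
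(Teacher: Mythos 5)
Your argument is correct and is exactly the intended one: the paper states the corollary without proof because it is an immediate consequence of the preceding theorem, whose condition (2), $A\in \mathrm{Fix}(\Phi_p)$, depends only on $p$ and therefore serves as the representation-independent pivot you identify. Nothing further to add.
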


\section{Problems and  future direction}

Ozawa \cite{Oz13} proved that Connes' Embedding Problem has a positive answer if and only if $C_{qa}(n,k) := C_q(n,k)^- = C_{qc}(n,k), \forall n,k$, where the bar denotes closure.  Later \cite{DP} proved the analogous result for synchronous correlations. Namely,  that Connes' Embedding Problem has a positive answer if and only if $C^s_{q}(n,k)^- = C^s_{qc}(n,k), \forall n,k.$  At the time the relationship between $C^s_{qa}(n,k)$ and the closure of $C^s_q(n,k)^-$ was unknown.  This {\it synchronous approximation problem,} i.e., whether or not every synchronous correlation that is a limit of correlations in $C_q(n,k)$ is actually a limit of synchronous correlations in $C_q^s(n,k)$ was settled in \cite{KPS} where it was shown that, indeed, $C^s_q(n,k)^- = C_{qa}^s(n,k)$.

We do not know the answer to either of these questions for bisynchronous correlations. 

\begin{prob} Is the answer to Connes' Embedding Problem positive if and only if $C^{bs}_q(n,n)^- = C^{bs}_{qc}(n,k), \forall n$? Or possibly if and only if $C^{bs}_{qa}(n,n) = C^{bs}_{qc}(n,n), \forall n$ ?
\end{prob}
If the answer to this first problem is affirmative, then it would show a close connection between the quantum permutation group and Connes' Embedding Problem. In a recent article (see \cite{CEP}) it was announced that the Connes' Embedding Problem has a negative answer. It will be interesting to see if there is any consequence of this result related to quantum permutation groups and bisynchronous games.

\begin{prob} Is $C^{bs}_{q}(n,n)^- = C^{bs}_{qa}(n,n), \forall n$ ?
\end{prob}
This latter problem is closely related to the following approximation problem for ``almost" quantum permutations.
Given $A \in M_d$ we set $\|A\|^2_2= \frac{1}{d} Tr(A^*A)$.
\begin{prob} Given $\epsilon >0$ is there a $\delta= \delta(\epsilon) > 0$, independent of $d$, such that whenever $E_{i,j} = E_{i,j}^2 = E_{i,j}^* \in M_d, \, 1 \le i,j \le n$ satisfy 
\[ \| I_d - \sum_{j=1}^n E_{i,j} \|_2 < \epsilon \, \forall i \text{ and } \| I_d - \sum_{i=1}^n E_{i,j} \|_2 < \epsilon \, \forall j,\]
then there exists a quantum permutation $u=(F_{i,j})$ with $\|E_{i,j} - F_{i,j}\|_2 \le \delta \, \forall i,j$ and such that $\delta \to 0$ as $\epsilon \to 0$  ?
\end{prob}

In \cite{brannanetal} it was shown that the *-algebra of the graph isomorphism game always admits a representation onto a Hilbert space and moreover it admits a representation onto a C$^*$-algebra with a trace. Hence, whenever this *-algebra is non-zero, then the corresponding graph isomorphism game has a perfect qc-strategy.  It is natural to ask if the same holds for bisynchronous games, in general.

\begin{prob} Let $\cl G$ be a bisynchronous game. If $\cl A(\cl G) \ne (0)$, then does this algebra always admit a non-trivial *-representation on a Hilbert space? It is also natural to ask whether this algebra admits a trace, i.e., does it possess a unital *-homomorphism into a C*-algebra with a trace ?
\end{prob}
 


\section{Acknowledgements}
This work was done while MR was a Postdoctoral Fellow at the department of Pure Mathematics, University of Waterloo.

\end{document}